\definecolor{DarkGreen}{rgb}{0.1,0.5,0.1}
\definecolor{DarkRed}{rgb}{0.5,0.1,0.1}
\definecolor{DarkBlue}{rgb}{0.1,0.1,0.5}
\definecolor{Gray}{rgb}{0.2,0.2,0.2}
\newcommand{\R}{\mathbb{R}}
\newcommand{\eps}{\varepsilon}
\newcommand{\OPT}{\mathrm{OPT}}
\newcommand{\paren}[1]{\left( #1 \right)}
\newcommand{\bracket}[1]{\left[ #1 \right]}
\DeclareMathOperator{\E}{\mathbb{E}}
\newcommand{\hU}{\hat{U}}
\newtheorem{theorem}{Theorem}[section]
\newtheorem{lemma}[theorem]{Lemma}
\newtheorem{proposition}[theorem]{Proposition}
\newtheorem{corollary}[theorem]{Corollary}
\theoremstyle{definition}
\newtheorem{definition}[theorem]{Definition}
\newcommand{\km}[1]{\todo[inline,color=teal!40]{\textbf{Kamesh:} #1}}
\def\@fnsymbol#1{\ensuremath{\ifcase#1\or \dagger\or \ddagger\or
   \mathsection\or \mathparagraph\or \|\or **\or \dagger\dagger
   \or \ddagger\ddagger \else\@ctrerr\fi}}
\newcommand*\samethanks[1][\value{footnote}]{\footnotemark[#1]}
\title{Fair Multi-agent Persuasion with Submodular Constraints}
\author {
    % Authors
    Yannan Bai\thanks{Computer Science Department, Duke University, Durham NC 27708-0129. Email: {\tt \{yannan.bai,munagala,yiheng.shen,davidsonjiaduo.zhu\}@duke.edu}. This work is supported by NSF award IIS-2402823.} \and
    Kamesh Munagala\samethanks[1] \and
    Yiheng Shen\samethanks[1] \and
    Davidson Zhu\samethanks[1]
}
\date{}
\begin{document}

\maketitle

\begin{abstract}
    We study the problem of selection in the context of Bayesian persuasion. We are given multiple agents with hidden values (or quality scores), to whom resources must be allocated by a welfare-maximizing decision-maker. An intermediary with knowledge of the agents' values seeks to influence the outcome of the selection by designing informative signals and providing tie-breaking policies, so that when the receiver maximizes welfare over the resulting posteriors, the expected utilities of the agents (where utility is defined as allocation times value) achieve certain fairness properties. The fairness measure we will use is majorization, which simultaneously approximately maximizes all symmetric, monotone, concave functions of the utilities. We consider the general setting where the allocation to the agents needs to respect arbitrary submodular constraints, as given by the corresponding polymatroid. 
    
    We present a signaling policy that, under a mild bounded rationality assumption on the receiver, achieves a logarithmically approximate majorized policy in this setting. The approximation ratio is almost best possible, and that significantly outperforms generic results that only yield linear approximations. A key component of our result is a structural characterization showing that the vector of agent utilities for a given signaling policy defines the base polytope of a different polymatroid, a result that may be of independent interest. In addition, we show that an arbitrarily good additive approximation to this vector can be produced in (weakly) polynomial time via the multiplicative weights update method.
\end{abstract}

\section{Introduction}
\label{sec:intro}

The challenge of selecting fair outcomes arises in several decision-making settings, such as assembling project teams,  allocating institutional funding, and recommending items or articles. Consider for example a government agency allocating research funding across different research categories and institutions. The agency has a total budget, but may impose caps on funding allocated to any single institution or research area to encourage diversity.  Typically, the agency will be a welfare maximizer, allocating funding in a way that maximizes the average quality of the proposed work per dollar spent. However, the quality of proposed work is often hard to assess from the proposals, with several competing projects having comparable quality. The resulting uncertainty in assessing quality can create unintentional unfairness in allocating funds. 

One way to ameliorate this problem is to carefully design the proposal mechanism to reveal the right amount of additional information about the quality of the proposed work. Indeed, revealing too much can lead to an arbitrary winner-take-all allocation of funds, disadvantaging proposals from research areas or institutions that were only slightly inferior, while revealing too little causes many proposals, even of widely different quality, to be comparably ranked, again leading to low overall welfare, and hence unfairness. 

\paragraph{Signaling and Information Design.} This motivates the view of fair selection as an {\em information revelation} problem, an approach taken by \citet{banerjee_majorized_2024,AuKawai} in the context of selecting a single individual, for instance, in hiring or admission decisions. This information revelation problem is posed as a special case of Bayesian persuasion~\citep{Kamenica2011bayesian} as follows: The proposals in the above motivating example are viewed as ``agents''. Each agent has a quality that is drawn from an independent prior distribution. There is an intermediary that designs the information environment, in the above case, the proposing mechanism and what the agents should reveal in it. This intermediary (called the sender) is assumed to know the exact qualities. The decision-maker (called the receiver) is the reviewing panel in the above example and only knows the prior. The intermediary constructs signals independently for each agent based on the true qualities, and sends these to the receiver. Using these signals, the receiver constructs a posterior over the qualities, subsequently choosing the winning solution. In the above case, this is the budget allocation that maximizes posterior quality per dollar allocated, while respecting the budget constraints on research areas or institutions. Ties are broken by a randomized rule specified by the intermediary. 

The main research question then becomes: {\em How can an intermediary strategically reveal information about agents' qualities so that a welfare-maximizing decision-maker produces a fair outcome?} Such an approach to fairness via information revelation differs from prior algorithmic work on fair selection that designed novel, often randomized, selection rules~\citep{kleinberg2018selection,celis2020interventions,singh2021fairness,shen2023fairness,devic2024stability}.

Note that the decision-maker (receiver) always acts as a welfare maximizer over its information (the posterior), while the sender guides it towards socially desirable objectives such as fairness via partial information revelation (signaling). To gain intuition for why such signaling can be beneficial, in the funding allocation example, if the quality of each proposal were exactly known through an exhaustive questionnaire, then the decision-maker could identify the absolute best candidates to fund. However, such detailed information forces the decision-maker to make an arbitrary, winner-take-all choice among comparable proposals. In contrast, if we carefully design either the questionnaire or  review process to be coarser (for instance,  by letting outside expert reviewers rate proposals as ``competitive'' or ``not competitive''), this could render many high-quality proposals indistinguishable to the decision-maker.  This can let them perform randomized tie-breaking that gives each high-quality proposal a fair chance, hence ensuring fairness while not compromising overall welfare significantly. The question then becomes -- how should this signal (the questionnaire or review process in this case) be designed with the fairness and welfare goals in mind?

\paragraph{Submodular Constraints and Polymatroids.} In this work, we generalize the information design framework of~\citet{banerjee_majorized_2024} beyond selecting a single agent to the broader domain of selection when the allocations define a polymatroid~\citep{schrijver_2003,fujishige2005submodular}. Here, the decision-maker's goal is to choose a feasible set (possibly involving fractional or randomized allocations) that maximizes total welfare. Polymatroids capture submodularity in the allocation constraints, and hence the structure of many selection problems involving diminishing returns or diversification. In the funding allocation example, budget caps on subsets of research areas or institutions, if they are submodular, will define a polymatroid.

A special case of polymatroids is {\em matroid constraints}, that is, allocations that are a randomization over independent sets of a matroid. This includes choosing one candidate in hiring or admission decisions considered in~\citet{banerjee_majorized_2024,AuKawai}, and its generalization to selecting $k$ candidates. Similarly, it includes partition matroids, where agents are partitioned into disjoint groups with a quota for each group. For example, in the task of hiring a specialized team of $n$ engineers, there could be constraints that at most $k_1$ test engineers and at most $k_2$ security experts are selected. As before, the agents' quality is partially known to the decision-maker, and the agents could signal their quality via an information intermediary, with the goal of ensuring fair selection.

As a different example, consider a recommendation platform that presents a ranked list of articles or products to a user. Each item has a known click probability—capturing its visibility when shown in a particular position—but an unknown \emph{quality} (e.g., user satisfaction conditional on click). We assume that a user, when shown an ordering of the items, scans linearly, stopping at the first item they click on, and selecting (reading/purchasing) that item. The vector of expected selection probabilities of the items for different orderings defines a polymatroid~\citep{DBLP:journals/mor/DeanGV08}. Suppose the platform is welfare maximizing and orders items to maximize expected quality of the selected item. The platform may not directly observe true quality, which can depend on external information such as expert reviews, advertiser relevance estimates, or cross-platform user data. An information intermediary, such as an ad exchange or third-party curator, may possess a more accurate signal of each item’s quality from its own predictive models and data. By selectively revealing or coarsening this information to the platform—for instance, through relevance scores or category-level ratings, the intermediary can influence the platform’s ranking so that the induced click probabilities and expected utilities are more balanced across items.

\paragraph{Remark on Information Intermediaries.} In both motivating examples, the  intermediary plays the role of a mechanism or entity that shapes the information available to the decision-maker. In settings such as hiring or grant evaluation, this corresponds to the design of questionnaires, review rubrics, or external expert ratings that determine how much detail about candidates’ or proposals’ true quality is revealed to
the decision makers. Intermediaries also naturally arise as components such as ad exchanges or content-curation models that possess richer side  information via predictive models and transmit these to the recommendation platform as described above. In all cases, the intermediary does not directly allocate resources, but influences allocation outcomes by deciding how coarsely or finely to reveal private information. 

%As a different example, consider a recommendation platform for articles or products that is ordering the items (articles/products) linearly for an end-user. Each item has an independent and known click probability.  Assume the platform is welfare maximizing and uses the ordering that maximizes expected quality of the selected item. Now if the true quality of an item is hidden from the platform, an information intermediary can signal the quality in a way that the platform's welfare-maximizing behavior causes the selection probabilities/qualities to be more fair across the items.

\paragraph{Fairness as Majorization.} 
To formalize the goal of fairness, we adopt the notion of {\em approximate majorization} from~\citet{banerjee_majorized_2024}. This notion arises from economics and operations research~\citep{hardy1934inequalities,goel2005approximate,kumar2006fairness,chakrabarty2019approximation}. Given a signaling policy, we assume each agent's utility equals its quality score times its fractional allocation.\footnote{Our results easily generalize to the setting where agent $i$'s utility is $v x_i$, where $v$ is a fixed multiplier and $x_i$ is the fractional allocation to the agent.} Each agent therefore receives an expected utility, where the expectation is over its quality and the outcome of the signaling policy. Our signaling policies will ensure the resulting vector of agents' expected utilities is approximately majorized; see \cref{def:majorize} for a formal definition. Informally, this means all symmetric, monotone, and concave fairness functions (including max-min fairness, total welfare, and Nash welfare) are simultaneously optimal to that approximation factor. In particular, such a notion is approximate on both traditional fairness notions (like max-min fairness) and the total welfare, in some sense being approximately best possible. Our goal in this paper is to design a signaling policy that achieves as small an approximation factor as possible, in a computationally efficient fashion.

As mentioned before, the key challenge in designing a fair signaling policy lies in the trade-off between too little and too much information. Decisions made with no additional signaling are unlikely to be fair. On the other hand, complete information revelation can also produce unfair outcomes, since comparable individuals can be treated very differently. We illustrate this trade-off via an example in \cref{app:example}. Our work initiates the study of fair information design for  selection problems with complex allocation constraints, specifically with polymatroid constraints.

\subsection{Main Results}
\label{sec:contri}
Our main technical result is the following  theorem about existence of approximately majorized solutions, which generalizes an analogous result in~\citet{banerjee_majorized_2024} from selecting a single agent to handle arbitrary submodular constraints:

\begin{theorem}[Informal]
\label{thm:main-informal-exist}
Assuming there are $n$ agents and each agent's quality lies in $[1,V]$, there is a  $O\left(\log \frac{V}{\epsilon} \right)$ approximate majorized policy when:
\begin{itemize}
    \item the agents have independent distributions over quality (or value) and independent signaling policies for this quality (see \cref{sec:prelim} for a detailed model), 
    \item the set of feasible allocations over the agents forms a polymatroid,  
    \item the receiver is a $(1 + \epsilon)$-approximate (in each coordinate) welfare maximizer over the polymatroid constraint given the posterior distributions over agent values, and
    \item the utility of an agent is the expectation over the signaling policy of its value times its expected fractional allocation given that value.
\end{itemize} 
\end{theorem}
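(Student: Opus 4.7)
The plan is to split the argument into a structural part and an algorithmic part, mirroring the high-level strategy suggested in the abstract. First I would establish the structural characterization: for any valid signaling policy together with $(1+\epsilon)$-approximate welfare-maximizing receiver responses (including tie-breaking), let $u=(u_1,\ldots,u_n)$ denote the induced expected utility vector, where $u_i=\E[v_i \cdot x_i]$. I would show that the set of achievable $u$ is exactly the base polytope of a new polymatroid $P'$ built from the original constraint polymatroid and the value priors. The key verification is submodularity of the associated set function $g(S)=\max \E[\sum_{i\in S} v_i x_i]$, where the maximum is taken over signaling policies composed with receiver best-responses. This reduces to checking that (i) for each realized value profile the allocation lies in a base polytope, (ii) conditional expectation over signals preserves submodularity, and (iii) the $(1+\epsilon)$-approximation in the receiver's objective gives the intermediary enough room to realize any corner of the induced polytope via tie-breaking.

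Given this structural result, finding an approximately majorized utility vector reduces to finding an approximately lex-optimal point in the base polytope of $P'$. A classical fact about polymatroid base polytopes is that the leximin point is exactly majorized by every feasible point, so the goal becomes approximating this leximin point by a \emph{constructible} signaling policy. To get the $O(\log(V/\epsilon))$ factor, I would geometrically bucket the value range $[1,V]$ into $O(\log(V/\epsilon))$ levels $B_k$ of multiplicative width $2$ (with a finer grid of width $\epsilon$ near the bottom to control the additive slack absorbed by the receiver's bounded rationality). Within a bucket, any two values are within a factor of $2$, so the $(1+\epsilon)$-approximation allows the intermediary to make agents in the same bucket effectively indistinguishable to the receiver; the intermediary then prescribes a within-bucket tie-breaking rule that realizes the leximin point of the corresponding \emph{contracted} polymatroid on the bucket members.

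To assemble these per-bucket policies into a single policy, I would take the mixture that signals bucket membership for each agent and invokes the corresponding bucket-level tie-breaking, then argue that the resulting utility vector is $O(\log(V/\epsilon))$-approximately majorized. The bound comes from comparing prefix sums of the sorted utility vector with those of the true leximin point in $P'$: each bucket contributes a constant-factor approximation to its share of any prefix (thanks to within-bucket leximin and base-polytope exchange), and summing across $O(\log(V/\epsilon))$ buckets gives the logarithmic loss. The matching lower bound from~\citet{banerjee_majorized_2024} transfers directly because single-agent selection is a special polymatroid, so the ratio is tight up to constants.

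The main obstacle will be step one — proving that the set of achievable expected utility vectors is precisely the base polytope of a polymatroid. This requires simultaneously handling the sender's freedom in designing independent signals, the receiver's $(1+\epsilon)$-approximate welfare maximization over a polymatroid, and the tie-breaking degree of freedom, and showing that these interact to produce a submodular induced set function. The submodularity verification is delicate because conditioning on signals couples the value distributions with the receiver's best response, and one must argue that gains to adding an agent to a set are nonincreasing despite the persuasion envelope being built from optima rather than from a fixed function. Once this is in place, the bucketing and mixture argument follows relatively cleanly, with the remaining technical content being the per-bucket analysis that exploits bounded rationality to pin down within-bucket allocations.
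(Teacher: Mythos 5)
Your first structural claim is where the proposal breaks. You assert that the set of achievable expected utility vectors $u$ over \emph{all} signaling policies (paired with $(1+\epsilon)$-approximate receiver responses) is exactly the base polytope of a polymatroid $P'$. This cannot be true: a base polytope lives on a single hyperplane $\sum_i u_i = \text{const}$, but different signaling policies induce different expected welfare $\sum_i u_i$ (full revelation versus no revelation, say), so the set of achievable utility vectors is not constant-sum and thus not a base polytope of anything. The paper is explicit on this: the polymatroid characterization of $\mathcal{U}$ (\cref{lem:mat-poly}) holds only for a \emph{fixed} mapping rule, and the paper stresses that ``it is no longer possible to show polymatroidal structure for the space of utilities in general signaling policies,'' precisely because the union of the per-mapping-rule base polytopes $\hat{\mathcal{R}}(\Delta)$ over all $\Delta$ has no nice structure. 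Moreover, even if your candidate $g(S)$ turned out to be submodular, that would not give you the conclusion: \cref{app:sec3} of the paper contains a concrete constant-sum convex set whose saturation function is monotone and submodular yet which is a \emph{strict} subset of the corresponding base polytope. The containment $\mathcal{B}(g)\subseteq\mathcal{U}$ in \cref{lem:mat-poly} is proved by a specific vertex-realization argument tailored to receiver-optimal faces of $\mathcal{P}(f)$, not as a generic consequence of submodularity.

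Because this cornerstone is wrong, the downstream argument would need to be rebuilt. The paper's actual route is: (i) prove the polymatroid structure only for a \emph{fixed} mapping (\cref{lem:mat-poly} for full revelation, \cref{lem:randomized-bucket-submodularity}/\cref{cor:base2} for a fixed single-mean mapping $\Delta$); (ii) show via the structure lemma (\cref{lem:maximal-mapping-dominance}) that among all independent mappings, the precomputable \emph{maximal mapping} $\Delta_{\max}$ dominates every other mapping on every set function value $\E[\hat g_k(S;\mu)]$ --- this step uses submodularity of $f$ together with the Bayes-plausibility ``pooling'' argument, and it is what makes the mapping rule a fixed, decoupled object; (iii) majorize exactly inside the resulting polymatroid $\hat{\mathcal{R}}(\Delta_{\max})$; (iv) eat the $K$ factor because the fake utility of a randomized single-mean policy is $1/K$ of the true utility. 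Your bucketing is also off in a way that matters: you propose buckets of multiplicative width $2$ and appeal to the $(1+\epsilon)$-approximate receiver to merge them, but a $(1+\epsilon)$-tolerant receiver cannot treat values that differ by a factor of $2$ as ties; the paper uses width-$(1+\epsilon)$ buckets (hence $K=O((\log V)/\epsilon)$) so the receiver's bounded rationality genuinely collapses each bucket. Likewise, the logarithmic loss is not a sum of per-bucket constant factors; it is the single $1/K$ dilution from counting one random bucket's fake utility. Finally, your tightness claim is incorrect: the known lower bound from \citet{banerjee_majorized_2024} is $\Omega(\log\log V)$, not $\Omega(\log V)$, so the upper bound is \emph{not} matched up to constants.
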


The example in \cref{app:example} shows that naive signaling policies cannot achieve the above bound even with simple polymatroidal constraints. We further note that there is a lower bound of $\Omega(\log \log V)$ on approximate majorization even for selecting a single agent~\citep{banerjee_majorized_2024}. This rules out an $O(1)$ approximate majorized signaling policy. 

We next complement \cref{thm:main-informal-exist} by showing that an arbitrarily good approximation to the above solution can be computed in (weakly) polynomial time via an application of the multiplicative weights method.

\begin{theorem}[Informal]
\label{thm:main-informal-compute}
For any $\delta > 0$, the utility vector from \cref{thm:main-informal-exist} can be approximated to an additive $O(\delta)$ in time polynomial in $1/\delta$, $n$, and $V$. %Here, we assume each agent's quality lies in $[1,V]$.
\end{theorem}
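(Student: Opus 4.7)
The plan is to apply the multiplicative weights update (MWU) framework to approximate the utility vector $u^{*}$ guaranteed by \cref{thm:main-informal-exist}. The starting point is the structural claim (stated in the abstract and established along the way to \cref{thm:main-informal-exist}) that, as the signaling policy varies, the induced agent utility vector always lies in the base polytope $B(Q)$ of a derived ``utility polymatroid'' $Q$ on the agent set. Since the approximately majorized utility vector can be certified by closeness to a specific balanced point of $B(Q)$---for instance, a point obtained by iteratively maximizing the minimum coordinate, or equivalently by minimizing a strictly convex symmetric function such as $\sum_i u_i^2$---the computational task reduces to approximately locating such a balanced point in $B(Q)$.

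To run MWU over $B(Q)$ I only need an efficient linear optimization oracle. This is supplied by the classical polymatroid greedy: given weights $w \in \R^n$, sort coordinates in decreasing order of $w$ and successively set the $i$-th coordinate to the marginal $f_Q(S_{\le i}) - f_Q(S_{< i})$, where $f_Q$ is the submodular rank function of the utility polymatroid. Plugging this oracle into a standard MWU schema, weights are updated multiplicatively based on each agent's slack relative to a target balanced vector (doubling over the possible levels in $[1,V]$ to emulate the iterative min-boost), and the running average of greedy iterates converges to an additive $O(\delta)$ approximation of $u^{*}$ in a number of iterations polynomial in $n$, $\log V$, and $1/\delta$.

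The central and most delicate ingredient is a tractable evaluation of $f_Q$, since $f_Q(S)$ is defined implicitly as the supremum total utility attainable by the agents in $S$ across all signaling policies subject to the bounded-rationality receiver. I would exploit two reductions. First, using the structural characterization behind \cref{thm:main-informal-exist}, computing $f_Q(S)$ can be recast as a water-filling-type greedy over the underlying allocation polymatroid, in which one repeatedly fixes the ``next-best-value'' posterior for an agent in $S$ and invokes the receiver's welfare-maximizing allocation. Second, to obtain strict polynomial dependence on $V$ rather than on the support size of the prior, each agent's value is discretized to a geometric grid with ratio $1+\delta/n$; this yields only $O((\log V)/\delta)$ distinct levels per agent and perturbs every utility coordinate by at most an additive $O(\delta)$.

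I expect the main obstacle to be this oracle construction: one must argue that the implicit supremum defining $f_Q(S)$ is attained (or $O(\delta)$-approximated) by a signaling policy simple enough to compute in $\mathrm{poly}(n,V,1/\delta)$ time, and that plugging an \emph{additively} noisy oracle into MWU still preserves the log-approximate majorization guarantee of \cref{thm:main-informal-exist} rather than merely approximate welfare. Once the oracle is in place, the MWU error analysis and the discretization bookkeeping are standard, so combining the per-iteration cost with the iteration count yields the claimed weakly polynomial running time.
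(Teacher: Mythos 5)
Your high-level plan — MWU with a polymatroid greedy oracle — is the right family of tools and matches the paper's instinct, but the critical step you flag as "the main obstacle" is exactly where the paper does something quite different, and your proposed resolution does not actually close the gap.

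The derived rank function $f_Q(S)$ is an expectation over exponentially many scenarios: $f_Q(S) = \E_{\vec v}[g(S;\vec v)]$ in the full-revelation case, and $\frac{1}{K}\sum_k \E_{\mu\sim\Delta_k}[\hat g_k(S;\mu)]$ in general. Your geometric discretization reduces the per-agent support to $O((\log V)/\delta)$, but the number of joint scenarios $\vec v$ is still exponential in $n$, so this does not make $f_Q(S)$ (and hence your oracle) evaluable. "Water-filling over the underlying allocation polymatroid" also does not resolve this, because the inner greedy is per-scenario while the outer object is an average. The paper avoids ever evaluating $f_Q(S)$ directly. It instead writes an explicit LP over scenario-indexed allocation variables $\{x_{i\vec v}\}$ (using the prefix-sum LP of Goel--Meyerson), Lagrangifies the constraints that tie agent utilities to scenario allocations, and applies MWU in the style of Bhalgat--Huang--Chawla. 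The Lagrangian dual oracle decouples scenario by scenario into linear optimization over the per-scenario base polytope $\mathcal B(g(\cdot;\vec v))$ (solvable by greedy since each $g(\cdot;\vec v)$ has a closed form), and the expectation over scenarios is handled by sampling polynomially many $\vec v$, justified by Hoeffding since the range is $[0,nV]$. That sampling step is essential and is absent from your plan.

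A second, related gap: for general signaling policies, $f_Q(S)$ is a supremum over mapping rules as well as over selection rules. You correctly note one must argue this supremum is attained by a "simple enough" policy, but you do not supply the argument. The paper supplies it via the structure lemma on maximal mappings (Lemma~\ref{lem:maximal-mapping-dominance}): switching every agent to its maximal mapping, for each bucket, only increases $\E[\hat g_k(S;\mu)]$, so the mapping rule can be fixed in advance to $\Delta_{\max}$ before any MWU is run. Without this, the object you would run MWU over is not even a well-defined single polymatroid.

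Finally, a smaller mismatch: you propose locating the balanced point by minimizing $\sum_i u_i^2$ or iteratively boosting the minimum. For a polymatroid base polytope this does identify the lexicographically optimal (hence $1$-majorized) point, so this is not wrong, but the paper instead runs one prefix-sum LP per $k$ (with binary search over $\OPT_j$) and then a combined LP, because that structure is what lets the Lagrangian decomposition and sampling machinery go through cleanly. If you want to pursue your variant, you would need to show how to fit the strictly-convex-objective formulation into a saddle-point framework whose dual oracle still decouples by scenario and tolerates the additive sampling noise, which is not automatic.
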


\paragraph{Remarks.}  The notion of a $(1+\varepsilon)$-approximate receiver in \cref{thm:main-informal-exist}  is key to our approach for handling general signaling policies.\footnote{We note that polynomial-time-computability results of~\citet{dughmi2016algorithmic} for Bayesian persuasion also assume an approximately optimal receiver.} It models a decision-maker who acts on a simplified information space: they first categorize agents' posterior means into a finite set of buckets and then optimize welfare based on a canonical value for each bucket. This is a natural model of bounded rationality, reflecting how decision-makers often simplify complex continuous inputs.  (See \cref{sec:approx} for details.)

We also note that the generic result of \citet{goel_simultaneous_2006}  gives an $O\left(\min\left\{n,\log \frac{U_{\max}}{n \cdot U_{\min}}\right\}\right)$-majorized solution, where $U_{\min}$ is the max-min fair utility, and $U_{\max}$ is the social welfare. Since these utilities depend on the allocation polytope and the distributions of quality scores, their ratio can be exponentially large, leading to an approximation ratio of $n$ in the worst case. In contrast, our approximation factor from \cref{thm:main-informal-exist} only depends on the scale of the quality scores, and can be much smaller.

We finally note that \cref{thm:main-informal-exist} requires the polymatroidal structure of the constraint set. In \cref{app:non-polymatroid-impossibility}, we show an example with non-polymatroidal constraints where there is no sub-linear (in number of agents $n$) approximation to majorization when the underlying allocation set is not a polymatroid. This holds even when $V=1$ and the agent values are deterministic so that no signaling is required.

\subsection{Technical Contributions} Our proof proceeds in two main stages. In \cref{sec:full-rev}, we first analyze the simpler but crucial case of ``full revelation'' policies, which means each agent truthfully reveals its quality. We do so to establish a novel structural property of the utility space, which we use as a building block to prove our main theorem for general signaling policies in \cref{sec:approx}. 

\paragraph{Existence Result.} To show the existence result in \cref{thm:main-informal-exist}, in \cref{sec:full-rev}, we start with the simple setting where the signaling policy is ``full revelation''.  In this setting, given the revealed values of the agents, the receiver has a choice between welfare maximizing allocations and chooses a solution for each vector of revealed qualities so that in expectation over these revelations, the vector of agent utilities is as fair as possible. Our main result is \cref{thm:maj-policy} that shows the existence of an exactly majorized policy in this setting. The key to showing this result is \cref{lem:mat-poly}, which provides a polyhedral characterization of persuasion with a welfare maximizing receiver:

\begin{quote}
    When the underlying feasibility constraint over agent allocations defines a polymatroid and the receiver is a welfare maximizer, the set of expected utility vectors of the agents is also the base polytope of a (different) polymatroid.
\end{quote} 

Given this statement, we can leverage the existence of exactly majorized solutions for polymatroids~\citep{tamir_least_1995,veinott_least_1971,megiddo_optimal_1974}. To see why this statement is non-trivial, we note that in each scenario of revealed values, set of welfare maximizing allocations define a face of the polymatroidal extension, which is itself the base polytope of a polymatroid~\citep{schrijver_2003,gijswijt2010caratheodoryrankpolymatroidbases}.   However, in our setting, the utility vector of the agents is not the allocation vector, but the allocation scaled by the quality of each agent, and in general, even if the allocation vectors are drawn from a polymatroid, but are scaled by fixed quantities that depend on the index of the coordinate, the resulting vectors {\em do not} define a polymatroid. This makes the statement of \cref{lem:mat-poly} novel and non-trivial; we provide additional discussion for its subtlety in \cref{app:sec3}. Our proof of \cref{lem:mat-poly} crucially uses the welfare maximizing behavior of the receiver and the polymatroid structure of the underlying feasibility constraint. We carefully analyze the greedy allocation rule of welfare optimization to show polymatroidal structure of the overall signaling problem. In \cref{app:sec3}, we complement this result by showing an example where the allocation set is not a polymatroid but has a $1$-majorized point, but the utility vectors cannot be approximately majorized. This showcases the crucial role of submodularity in our results.

Once we establish exact majorization for full revelation policies, in \cref{sec:approx},  we combine this with the idea of ``single-mean policies'' from~\citet{banerjee_majorized_2024} and the existence of exactly majorized points for polymatroids~\citep{tamir_least_1995} to show the existence of a logarithmically majorized policies for general signaling policies. Note that unlike the full revelation setting, it is no longer possible to show polymatroidal structure for the space of utilities in general signaling policies. A key ingredient here is \cref{lem:maximal-mapping-dominance} that shows the existence of a single-mean policy of a certain type that does have polymatroid structure, again using the monotonicity and submodularity properties of the underlying polymatroid. This step crucially requires the agents have independent distributions over their quality and independent signaling policies.

\paragraph{Computation.} The ideas for showing \cref{thm:main-informal-compute} are more standard. For this, in \cref{sec:poly-time}, we write majorization problem as a set of linear programs over the exponentially many scenarios of revelations, with a polymatroid optimization problem (capturing receiver behavior) for each scenario. We use a majorization LP from~\citet{goel_simultaneous_2006}. We now use the multiplicative weights method~\citep{arora_multiplicative_nodate} to solve this program approximately; the dual oracle becomes the expectation over scenarios, of a weighted welfare maximization problem over the base polytope of the polymatroid capturing the receiver optimization problem for that scenario. The latter can again be solved via a greedy algorithm, and the expectation can be approximated by sampling polynomially many scenarios. Our overall approach follows~\citet{cai_optimal_2012,bhalgat_optimal_2013}, who apply similar frameworks for optimal multidimensional auctions. 

\paragraph{Conceptual Contribution.} Conceptually, \cref{lem:mat-poly,lem:maximal-mapping-dominance} allow for a direct analysis of the utility space. This is in contrast to the approach in~\citet{banerjee_majorized_2024}, which focused on the special case of single-agent selection (a matroid special case). Their work  reduced a relaxed version of the single-selection problem to a majorized network flow instance. In contrast, our proof is more direct, and provides a new structural understanding via a geometric and polymatroidal characterization of the utility space. This not only allows us to leverage existing results on finding majorized points in polymatroids~\citep{tamir_least_1995} but also provides hope of achieving fairness and welfare guarantees for information design in other, more complex, settings. 

At a higher level, the difficulty with persuasion is that the sender needs to treat the receiver's optimization routine as a black box, which makes the optimal persuasion problem non-convex in general, even when the receiver is solving a convex optimization problem. This aspect has precluded the development of general-purpose techniques based on convex relaxations to derive structural insights into these problems. As an example, optimal auction design and pricing under persuasion requires the development of specialized techniques to handle non-convexity, and some generalizations admit strong lower bounds for this reason~\citep{AlijaniBMW22,Banerjee2024fair}. Our main contribution is to show that a large class of submodular selection problems admits to a convex structure even in the presence of persuasion, a result that is {\em a priori} not obvious.

\subsection{Other Related Work}
\label{sec:work}
\paragraph{Bayesian Persuasion.}
\emph{Information design} is a framework for understanding how a sender can influence a receiver's actions by strategically revealing information~\citep{bergemann2019information, dughmi2017algorithmic}.  Within this broad area, our work falls in the setting of Bayesian persuasion~\citep{Kamenica2011bayesian}, where the receiver performs Bayesian updates based on the sender's signals. This problem has been widely studied in various contexts in computer science and economics~\citep{bergemann2015limits,Banerjee2024fair,DBLP:conf/aaai/XuRDT15,DBLP:conf/sigecom/BabichenkoTXZ21,DBLP:journals/mktsci/ChakrabortyH14,tang2024intrinsic}.  We note that computationally efficient algorithms exist for arbitrary objectives in persuasion, notably the FPTAS of \citet{dughmi2016algorithmic}. However, majorization requires the simultaneous near-optimality of an entire family of fairness functions, and here, even showing existence is non-trivial. Making progress therefore requires the development of new structural insights into the problems that deviate significantly from prior literature. As mentioned before, our work derives novel convexity characterizations for a large class of persuasion problems, allowing us to argue strong fairness properties.

Our model builds on work by~\citet{AuKawai}, who consider selfish agents who independently construct their own signaling policies to persuade a receiver to allocate to them. In contrast with their work that focuses on allocating to one agent, we consider general allocation polymatroids, and focus on fairness in a centralized setting with a common sender, akin to~\citet{banerjee_majorized_2024}.

\paragraph{Majorization in Optimization.}
Majorization was introduced in the seminal works of~\citet{karamata1932inegalite,hardy1934inequalities}. It provides a strong framework for fairness that is equivalent to maximizing all symmetric and concave welfare functions. The work of~\citet{goel2005approximate,goel_simultaneous_2006} defined an approximate version suitable for resource allocation, hence applying it to approximation algorithms. More classical work has found connections between majorization and specific combinatorial structures, most notably the exact majorization of flows in single-source multi-sink networks~\citep{veinott_least_1971,megiddo_optimal_1974}, and of polymatroids in general~\citep{tamir_least_1995}. Our work contributes to this literature by establishing a new structural connection between majorization in Bayesian persuasion and the geometry of polymatroids. The resulting approximation ratios are a significant improvement over generic approximation bounds that follow from~\citet{goel_simultaneous_2006}, that can depend linearly on problem parameters.

%\paragraph{Applications of Polymatroids.}
%Beyond the hiring and grant allocation examples mentioned before, polymatroids model a wide range of practical allocation problems, and have been widely studied~\cite{fujishige2005submodular,schrijver_2003}.  In resource allocation, they capture scenarios like allocating a budget across different project categories, each with its own capacity and with caps on intersecting sets of categories. %A different domain is algorithmic data summarization and curation, where the informational value of a set of items is a submodular function~\cite{lin2011class}, and different sources can persuade an aggregator of the utility of their data, with the goal of fair exposure in the result set. %Consider a platform, such as a news aggregator or a scientific literature portal, that must select a small, representative subset of items (articles, papers) to display. The informational value of a set of items is often modeled by a submodular function, reflecting the diminishing returns of adding an item that is similar to ones already selected~\cite{lin2011class}. The platform (the receiver) may have a simple objective, like maximizing predicted engagement based on noisy signals of item quality. An intermediary, seeking to promote informational diversity or fairness to different sub-topics, can use our framework to design signals that persuade the platform toward a more balanced and representative summary.  Our results apply to any of these settings where an intermediary wishes to persuade a welfare-maximizing allocator towards fairer outcomes.
\section{Preliminaries}\label{sec:prelim}

\subsection{Signaling Policies} 
There is a set $E$ of $n$ agents. We call the decision-maker the receiver. The value $v_i$ of each agent $i$ is drawn independently from the distribution $D_i$. The decision-maker knows the distributions $\{D_i\}_{i=1}^n$, but does not know the realized values $\{v_i\}_{i=1}^n$. We assume $v_i \in [1,V]$ for all $i$.

After the values $\{v_i\}_{i=1}^n$ are realized, an intermediary (or sender) uses these values to send signals $\{\sigma_i\}_{i=1}^n$ to the receiver via a \textit{signaling policy}. A signaling policy $\omega$ comprises the \textit{mapping rule} and the \textit{selection rule}. The signal for each agent is independently constructed from the other agents.

\paragraph{Mapping Rule.}  A mapping rule is a collection of signals for each agent $\{\Gamma_i\}_{i=1}^n$ together with a function that maps the value $v_i$ of an agent $i$ to a distribution $g_{iv_i}$ over signals in $\Gamma_i$. When the sender sees the realized values $\{v_i\}_{i=1}^n$, they compute the corresponding signal distributions $\{g_{iv_i}\}_{i=1}^n$ by the mapping rule. They then generate the realized signals $\{\sigma_i\}_{i=1}^n$ by drawing $\sigma_i\sim g_{iv_i}$ independently for each agent, and the receiver sees $\{\sigma_i\}_{i=1}^n$.  

After receiving the set of signals, the receiver computes the posterior distributions $\{D_i(\sigma_i)\}_{i=1}^n$ over agent values using Bayes' rule. Let $\mu_i = \E[D_i(\sigma_i)]$ denote the posterior mean of agent $i$. A set of posterior means is said to be {\em Bayes plausible} if it corresponds to a valid signaling policy. Under Bayes plausibility, the expectation of the posterior mean over the signals is equal to the prior mean.

\paragraph{Allocation Constraints.} There is a polymatroid constraint $(E,f)$ on the set of possible allocations to the agents.

\begin{definition}[Submodularity and Polymatroids]
    Let $E$ be a finite set and $f$ a non-negative, monotone, submodular, function from the power set $2^E$ to $\R_+$, which satisfies 
 $f(\emptyset)=0$; 
 $$f(A)\leq f(B) \mbox{ for } A\subseteq B\subseteq E, \mbox{ and }$$ 
     $$ f(A)+f(B)\geq f(A\cup B)+f(A\cap B) \mbox{ for } A,B\subseteq E.$$

    Then, the pair $(E,f)$ is called a polymatroid, where $E$ is called the ground set and $f$ the rank function of the polymatroid. A polymatroid defines a polytope $\mathcal{P}(f)\subset \R_+^E$ by \[\mathcal{P}(f)=\{\mathbf{x}: \mathbf{x}(A)\leq f(A)\text{ for all }A\subseteq E\}.\]
    This polytope is called the independence polytope of the polymatroid. When there is no ambiguity, we also refer to the independence polytope as the polymatroid. 
    
    The {\em base polytope} of a polymatroid (or the corresponding submodular function) is the following:
$$ \mathcal{B}(f) = \{ \mathbf{x} \in \mathcal{P}(f): \mathbf{x}(E) = f(E)\}.  $$
\end{definition}

\paragraph{Selection Rule.} We assume that the receiver is a utilitarian welfare maximizer, so that it maximizes the sum of the posterior utilities of the agents subject to the polymatroid constraint $(E,f)$.  In other words, the receiver constructs a welfare-maximizing allocation $\mathbf{x} \in \mathcal{P}(f)$ that maximizes $\sum_{i = 1}^n \mu_i x_i$. The set of welfare-optimal allocations $\mathbf{x}$ defines a face of the polymatroid $(E,f)$, and the receiver chooses an allocation from this face to satisfy the sender's auxiliary objective (described later). This choice is termed the {\em selection rule}.  We assume agent $i$ obtains utility $\mu_i x_i$. 

A signaling policy $\Omega$ is a distribution over independent signaling policies $\omega$. Before the process starts, the sender draws $\omega\sim \Omega$ and implements $\omega$.  The signaling policy is known to the receiver. Since  $v_i \sim D_i$, this yields a expected utility $U_i(\Omega)$ for the agent, where the expectation is over $D_i$, the distributions of other agents' values, and the distribution over signaling policies in $\Omega$.  

\subsection{Fairness and Majorization} 
The goal of the sender is to design a signaling policy $\Omega$ that is fair. We capture this as designing $\Omega$ such that the vector $\{U_i(\Omega)\}_{i=1}^n$ is $\alpha$-majorized over the set of all signaling policies, for the smallest possible value $\alpha$. The selection rule of $\mathbf{x}$ among the receiver's welfare-maximizing allocations will be influenced by this fairness goal.

\begin{definition}[$\alpha$-Majorization,~\citet{goel_simultaneous_2006,banerjee_majorized_2024}]
\label{def:majorize}
For $\alpha \ge 1$, a signaling policy $\Omega$ is called \emph{$\alpha$-majorized}  if for any $k \in \{1,2,\ldots,n\}$ and any signaling policy $\Omega'$, the sum of the $k$ smallest utilities in $\{U_i(\Omega)\}_{i=1}^n$ is at least $1/\alpha$ times the sum of the $k$ smallest utilities in $\{U_i(\Omega')\}_{i=1}^n$.
\end{definition}

The following result shows that approximate majorization is equivalent to simultaneously approximating all symmetric and concave welfare functions.

\begin{proposition} [\citet{goel_simultaneous_2006}]
The signaling policy $\Omega$ is \emph{$\alpha$-majorized} if and only if for every symmetric, non-decreasing, and concave function $h \colon \mathbb{R}_{\geq 0}^n \to \mathbb{R}_{\geq 0}$  and any other signaling policy $\Omega'$, 
\[
h\left( U(\Omega) \right) \ge \frac{1}{\alpha} \cdot h\left( U(\Omega') \right).
\]
\end{proposition}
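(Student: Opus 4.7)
My plan is to prove both directions by reducing to the classical Hardy--Littlewood--P\'olya characterization of majorization, together with a concavity-at-zero trick to absorb the multiplicative factor $\alpha$.

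For the $(\Leftarrow)$ direction, I would exhibit a family of test functions that directly encode the bottom-$k$ partial sums. For each $k \in \{1,\ldots,n\}$, let $h_k(u) = \sum_{i=1}^k u_{(i)}$, where $u_{(i)}$ denotes the $i$-th smallest coordinate of $u$. Writing $h_k(u) = \min_{S \subseteq [n],\, |S|=k} \sum_{i \in S} u_i$ exhibits $h_k$ as a minimum of linear functionals, hence concave; it is manifestly symmetric, and non-decreasing on $\mathbb{R}_{\geq 0}^n$ because all coordinates are non-negative. Plugging each $h_k$ into the hypothesis recovers precisely the definition of $\alpha$-majorization.

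For the $(\Rightarrow)$ direction, I would set $y = \alpha \cdot U(\Omega)$ and $x = U(\Omega')$. The hypothesis then reads $\sum_{i=1}^k y_{(i)} \geq \sum_{i=1}^k x_{(i)}$ for every $k$; in standard terminology, $y$ weakly submajorizes $x$ from below. Classical majorization theory (see, e.g., Marshall--Olkin--Arnold) then implies $h(y) \geq h(x)$ for every symmetric, non-decreasing, concave $h \colon \mathbb{R}_{\geq 0}^n \to \mathbb{R}_{\geq 0}$. To convert this into a bound on $h(U(\Omega))$ itself, I would apply concavity of $h$ through the origin: since $\alpha \geq 1$ and $h(0) \geq 0$, writing $U(\Omega) = \tfrac{1}{\alpha}(\alpha U(\Omega)) + \bigl(1 - \tfrac{1}{\alpha}\bigr)\cdot 0$ and invoking concavity yields $h(U(\Omega)) \geq \tfrac{1}{\alpha} h(\alpha U(\Omega)) + (1-\tfrac{1}{\alpha}) h(0) \geq \tfrac{1}{\alpha} h(U(\Omega'))$.

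The one non-routine step is the invocation of Hardy--Littlewood--P\'olya, which is typically stated for majorization with total-sum equality, whereas our hypothesis only gives $\sum y \geq \sum x$. The standard workaround is to produce an auxiliary vector $y' \leq y$ (coordinate-wise, after a permutation) with $\sum y' = \sum x$ that majorizes $x$ in the exact sense, apply the equality form of Hardy--Littlewood--P\'olya to get $h(y') \geq h(x)$, and then use monotonicity of $h$ to conclude $h(y) \geq h(y')$. This is precisely where the non-decreasing assumption on $h$ in the statement is needed. I do not anticipate any subtlety specific to the signaling model: the proposition is purely a structural statement about non-negative utility vectors, and the signaling setup enters only via $U(\Omega), U(\Omega') \in \mathbb{R}_{\geq 0}^n$.
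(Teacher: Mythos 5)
Your argument is correct, and worth noting is that the paper itself gives no proof of this proposition --- it is quoted as a black-box result of \citet{goel_simultaneous_2006} --- so there is no in-text argument to compare against. What you give is the standard proof: the $(\Leftarrow)$ direction via the test functions $h_k(u)=\min_{|S|=k}\sum_{i\in S}u_i$ (which are indeed symmetric, non-decreasing, concave, and nonnegative on $\R_{\ge 0}^n$, and recover the bottom-$k$ sums), and the $(\Rightarrow)$ direction via the Tomi\'{c}--Weyl/Fan characterization of weak supermajorization plus the concavity-through-the-origin step that uses $\alpha\ge 1$ and $h(0)\ge 0$. The only place I would urge a little more care is your remark that the auxiliary vector $y'$ can be obtained by a ``standard workaround'': the naive version (simply subtracting the excess mass from the largest coordinate of $y$) can in fact \emph{break} the lower partial-sum dominance when the deficit is large, so the correct construction is the water-filling one (set $y'_i=\min(y_i,\tau)$ for the level $\tau$ making $\sum_i y'_i=\sum_i x_i$), or equivalently cite the result directly (e.g., Marshall--Olkin--Arnold, Theorem 5.A.9). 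Since you invoke the existence of $y'$ as a citation rather than constructing it, the argument as written is fine, but if you expand this into a full proof, be sure to use the water-filling construction and not the single-coordinate subtraction.
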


As mentioned before, we illustrate why signaling can improve fairness via an example in \cref{app:example}.

\subsection{Properties of Polymatroids}

%\begin{lemma}[Chapter 44 in \cite{schrijver_2003}]\label{lem:mon-sub}
%    Monotonicity and submodularity of a function is preserved under addition and multiplication by a positive constant. 
%\end{lemma}

\paragraph{Greedy Algorithm.} Given a polymatroid $\mathcal{P}(f)$, the classic greedy algorithm works as follows:
\begin{enumerate}
    \item Order the indices of $[n]$ according to a permutation (ordering) $\pi$.
    \item  For $k=1, \dots, n$, set
    $ x_{\pi(k)} = f(\{\pi(1), \dots, \pi(k)\}) - f(\{\pi(1), \dots, \pi(k-1) \}) $.
\end{enumerate}

We have the following well-known lemmas:

\begin{lemma}[\citet{schrijver_2003}, Chapter 44] 
\label{lem:greedy1}
Given a vector $\vec{v} > \vec{0}$, the function $\vec{v} \cdot \vec{x}$ is maximized over $\mathcal{P}(f)$ by the greedy algorithm that uses an ordering $\pi$ such that $v_{\pi(1)} \ge v_{\pi(2)} \ldots \ge v_{\pi(n)}$. Further, any vertex\footnote{This is shown for the extended polymatroid in~\citet{schrijver_2003}. Note that for a strictly positive $\vec{v}$, the optimum face will belong to the extended polymatroid.} of the optimal face of $\mathcal{P}(f)$  corresponds to some permutation $\pi$ satisfying $v_{\pi(1)} \ge v_{\pi(2)} \ldots \ge v_{\pi(n)}$.
\end{lemma}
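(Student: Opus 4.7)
The plan is to use the classical ``summation by parts'' identity together with the polymatroid rank constraints, and to verify separately that the greedy output is feasible. To set up, fix a permutation $\pi$ with $v_{\pi(1)} \ge v_{\pi(2)} \ge \cdots \ge v_{\pi(n)}$, set $S_k = \{\pi(1),\ldots,\pi(k)\}$ with $S_0 = \emptyset$, and observe that for any vector $\vec{x}$,
\[ \vec{v}\cdot\vec{x} = \sum_{k=1}^n v_{\pi(k)} x_{\pi(k)} = v_{\pi(n)}\, \vec{x}(S_n) + \sum_{k=1}^{n-1} \bigl(v_{\pi(k)} - v_{\pi(k+1)}\bigr)\, \vec{x}(S_k). \]
If $\vec{x} \in \mathcal{P}(f)$, the coefficients $v_{\pi(k)} - v_{\pi(k+1)}$ and $v_{\pi(n)}$ are all nonnegative (by the sorted order and $\vec{v} > \vec{0}$), so the polymatroid inequalities $\vec{x}(S_k) \le f(S_k)$ yield the upper bound
\[ \vec{v}\cdot\vec{x} \le v_{\pi(n)} f(S_n) + \sum_{k=1}^{n-1} \bigl(v_{\pi(k)}-v_{\pi(k+1)}\bigr) f(S_k). \]
The greedy solution $\vec{x}^g$ with $x^g_{\pi(k)} = f(S_k) - f(S_{k-1})$ telescopes to $\vec{x}^g(S_k) = f(S_k)$ for each $k$, making every term in this bound tight, so $\vec{x}^g$ attains the maximum as soon as we know it is feasible.

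Feasibility is where submodularity enters. I would show $\vec{x}^g(A) \le f(A)$ for every $A \subseteq E$ by summing, for each $\pi(k) \in A$, the submodular inequality $f(S_k)-f(S_{k-1}) \le f\bigl((S_{k-1}\cap A)\cup\{\pi(k)\}\bigr) - f(S_{k-1}\cap A)$; the right-hand sides telescope to $f(A)$. Monotonicity of $f$ simultaneously ensures each greedy increment is nonnegative, so $\vec{x}^g \ge \vec{0}$.

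For the vertex statement, the bound above is attained at $\vec{x}$ if and only if $\vec{x}(S_k) = f(S_k)$ for every $k$ with $v_{\pi(k)} > v_{\pi(k+1)}$; this set of equalities cuts out the optimal face $F \subseteq \mathcal{P}(f)$. Any vertex $\vec{x}^*$ of $F$ is in particular a vertex of $\mathcal{P}(f)$ (or of the extended polymatroid, per the paper's footnote), which by the standard polymatroid characterization is the greedy output for some ordering $\pi^*$, making the entire chain $T_k = \{\pi^*(1),\ldots,\pi^*(k)\}$ tight. Since the mandatory tight sets $S_k$ of $F$ must appear among the $T_k$, the chain $\{T_k\}$ must refine $\{S_k\}$ at every strict drop in $\vec{v}$; this forces $\pi^*$ to agree with $\pi$ outside the tie-groups of $\vec{v}$ and to be arbitrary within each tie-group, so $v_{\pi^*(1)} \ge \cdots \ge v_{\pi^*(n)}$.

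The hardest step is the vertex characterization when $\vec{v}$ has ties: equal values inflate the dimension of the optimal face, and one must argue carefully that every vertex is produced by some $\pi^*$ consistent with the sorted order of $\vec{v}$, not merely with its tie-pattern. The matching between chains of tight sets and greedy permutations handles this cleanly, but is the part most worth verifying. The feasibility of greedy, although standard, also depends nontrivially on submodularity via the telescoping above and would be the second point to write out carefully.
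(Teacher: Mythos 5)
The paper cites this lemma from Schrijver's book (Chapter 44) and gives no proof of its own, so there is no in-paper argument to compare against; I will just assess your proof directly. Your Abel-summation bound for optimality of the sorted greedy ordering and your telescoping-submodularity argument for feasibility of the greedy output are both correct and are the standard way to prove the first sentence of the lemma.

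The vertex characterization has a genuine gap. You assert that the mandatory tight sets $S_k$ (the prefix sets at the strict drops of $\vec{v}$) ``must appear among the $T_k$,'' i.e.\ must lie on the prefix chain of whatever ordering $\pi^*$ realizes the vertex $\vec{x}^*$ as a greedy output. This is false: the $S_k$ are certainly tight at $\vec{x}^*$, but the prefix chain of $\pi^*$ is just one maximal chain in the lattice of tight sets, and a maximal chain need not pass through every tight set. Concretely, take $E=\{1,2,3,4\}$ with $1$ and $3$ parallel in a matroid (so $f(\{1,3\})=f(\{1\})=1$, $f(\{1,2\})=f(\{1,2,3\})=2$, $f(E)=3$) and $v_1=v_2>v_3=v_4$, so $S_1=\{1,2\}$. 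The ordering $\pi^*=(1,3,2,4)$ produces the same greedy output $(1,1,0,1)$ as $(1,2,3,4)$, and that output lies on the optimal face, yet the prefix chain of $\pi^*$ is $\{1\}\subset\{1,3\}\subset\{1,2,3\}\subset E$, which skips $S_1$. So from ``$\vec{x}^*$ is greedy for some $\pi^*$'' you cannot conclude that this particular $\pi^*$ is block-sorted. The conclusion you want does hold — the vertex always also arises from some block-sorted ordering — but showing this needs an extra step. Either argue, using the distributive lattice structure of tight sets at a vertex of the base polytope, that one may re-choose the realizing ordering so its prefix chain passes through the $S_k$; or, more cleanly, show that the optimal face $F$ decomposes as a Cartesian product $\prod_j \mathcal{B}(f_j)$ over the tie-blocks $E_j$ of $\vec{v}$, with $f_j(A)=f(S_{j-1}\cup A)-f(S_{j-1})$ for $A\subseteq E_j$, so that vertices of $F$ are products of vertices of the block base polytopes, which are exactly the greedy outputs of block-sorted orderings. (A minor related imprecision: a vertex of the face $F$ is a vertex of $\mathcal{B}(f)$, not merely of $\mathcal{P}(f)$, and it is vertices of $\mathcal{B}(f)$ — not of $\mathcal{P}(f)$, whose vertex set is larger — that are in bijection with greedy orderings.)
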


\begin{lemma}[\citet{fujishige2005submodular}, Chapter 3] 
\label{lem:greedy2}
The set of vertices of $\mathcal{B}(f)$ coincides with the set of vectors $\vec{x}$ obtained by running the greedy algorithm for all possible orderings $\pi$.
\end{lemma}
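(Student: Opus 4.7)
The plan is to verify the two inclusions separately, using Lemma~\ref{lem:greedy1} as the main hammer in both directions and exploiting the fact that on $\mathcal{B}(f)$ every linear objective can be perturbed to a strictly positive one without changing the maximizer.

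First I would show that every greedy output $\vec{x}^{(\pi)}$ is a vertex of $\mathcal{B}(f)$. Membership in $\mathcal{B}(f)$ is routine: the defining differences $x^{(\pi)}_{\pi(k)} = f(\{\pi(1),\dots,\pi(k)\}) - f(\{\pi(1),\dots,\pi(k-1)\})$ telescope to $\vec{x}^{(\pi)}(E) = f(E)$, monotonicity of $f$ gives nonnegativity, and for an arbitrary $A \subseteq E$ one expands $\vec{x}^{(\pi)}(A)$ along the prefixes of $\pi$ and uses the submodularity inequality repeatedly to collapse it below $f(A)$. To show that $\vec{x}^{(\pi)}$ is a vertex, I would then exhibit a strictly positive cost vector $\vec{v}$ with $v_{\pi(1)} > v_{\pi(2)} > \cdots > v_{\pi(n)}$ and invoke Lemma~\ref{lem:greedy1}: this $\vec{v}$ is uniquely maximized over $\mathcal{P}(f)$ by $\vec{x}^{(\pi)}$ (since any vertex of the optimal face must be consistent with the strict ordering), and because $\vec{x}^{(\pi)} \in \mathcal{B}(f) \subseteq \mathcal{P}(f)$, it is also the unique maximizer over $\mathcal{B}(f)$.

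Conversely, I would take any vertex $\vec{x}^{\star}$ of $\mathcal{B}(f)$ and produce a generating ordering. By the standard vertex characterization there is a cost vector $\vec{c}$ that is uniquely maximized by $\vec{x}^{\star}$ on $\mathcal{B}(f)$. Shifting $\vec{c}$ to $\vec{c} + M\vec{1}$ for a large constant $M$ leaves the set of $\mathcal{B}(f)$-maximizers unchanged because $\vec{x}(E) = f(E)$ is constant on $\mathcal{B}(f)$, and one can additionally perturb to break ties so that the entries are strictly decreasing in some ordering $\pi$. Applying Lemma~\ref{lem:greedy1} to the resulting strictly positive vector yields a greedy solution $\vec{x}^{(\pi)}$ that maximizes the shifted cost over all of $\mathcal{P}(f)$; since $\vec{x}^{(\pi)} \in \mathcal{B}(f)$ by the previous paragraph, it is a maximizer over $\mathcal{B}(f)$ as well, and uniqueness forces $\vec{x}^{\star} = \vec{x}^{(\pi)}$.

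The main subtlety I expect is the interplay between $\mathcal{P}(f)$ and $\mathcal{B}(f)$: Lemma~\ref{lem:greedy1} is phrased for $\mathcal{P}(f)$, so a tie in $\vec{c}$ could place the greedy vertex on a lower-dimensional face of $\mathcal{P}(f)$ that is not a vertex of $\mathcal{B}(f)$. The shifting-and-perturbation trick above is precisely what defuses this issue, and I would spend a sentence in the writeup justifying that the tie-breaking perturbation can be made small enough to preserve the unique $\mathcal{B}(f)$-maximizer $\vec{x}^{\star}$. The submodularity check in the forward direction is also mildly delicate and would be proved by induction on the prefix used to express $\vec{x}^{(\pi)}(A)$.
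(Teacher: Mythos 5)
The paper cites this lemma from Fujishige's book without reproducing a proof, so there is no in-paper argument to compare against. Your self-contained derivation from \cref{lem:greedy1} is correct.

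Both directions are sound. In the forward direction, the telescoping sum, monotonicity, and the standard submodularity chain argument do indeed show $\vec{x}^{(\pi)} \in \mathcal{B}(f)$; choosing strictly decreasing positive costs $v_{\pi(1)} > \cdots > v_{\pi(n)}$ forces the optimal face of $\mathcal{P}(f)$ to have a unique compatible ordering and hence a unique vertex, which must be $\vec{x}^{(\pi)}$; and since $\mathcal{B}(f)$ is a face of $\mathcal{P}(f)$, a $\mathcal{P}(f)$-vertex lying in $\mathcal{B}(f)$ is a $\mathcal{B}(f)$-vertex. In the converse, the shift by $M\vec{1}$ is exactly the right trick: it is invariant on $\mathcal{B}(f)$ because $\vec{x}(E)=f(E)$ there, makes the cost strictly positive so \cref{lem:greedy1} applies to $\mathcal{P}(f)$, and a sufficiently small tie-breaking perturbation preserves the unique $\mathcal{B}(f)$-optimality of $\vec{x}^\star$ because a vertex is a strict maximizer with a positive optimality gap over the remaining vertices. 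One small simplification you could make in the forward direction: rather than routing through uniqueness of the optimal face, note directly that the greedy output makes the chain of constraints $\vec{x}(\{\pi(1),\dots,\pi(k)\}) = f(\{\pi(1),\dots,\pi(k)\})$ tight for $k=1,\dots,n$, and these $n$ constraints are linearly independent, so $\vec{x}^{(\pi)}$ is a vertex of $\mathcal{P}(f)$ without any appeal to strict cost orderings.
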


In the classic lemma below, the result for the polytopes $\mathcal{P}$ is from~\citet{schrijver_2003}, while the result for base polytopes uses the above characterization of its vertices --- it is easy to write each vertex of $\mathcal{B}(f_1 + f_2) $ as the sum of the corresponding vertices of $\mathcal{B}(f_1) + \mathcal{B}(f_2)$.

\begin{lemma}[\citet{schrijver_2003}]
\label{cor:mon-sub}
The following statements hold for independence and base polyhedra of non-negative, monotone, submodular functions:
\begin{itemize}
\item %For any two non-negative, monotone, submodular functions $f_1$ and $f_2$, we have
$\mathcal{P}(f_1) + \mathcal{P}(f_2) = \mathcal{P}(f_1 + f_2) \qquad \mbox{and}  \qquad \mathcal{B}(f_1) + \mathcal{B}(f_2) = \mathcal{B}(f_1 + f_2);$ 
\item For any $\alpha > 0$, we have
$\qquad \mathcal{P}(\alpha f) = \alpha \mathcal{P}(f)  \qquad \mbox{and}  \qquad \mathcal{B}(\alpha f) = \alpha \mathcal{B}(f).$
\end{itemize}
\end{lemma}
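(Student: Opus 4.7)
The plan is to handle the four identities separately. The scaling identities $\mathcal{P}(\alpha f) = \alpha \mathcal{P}(f)$ and $\mathcal{B}(\alpha f) = \alpha \mathcal{B}(f)$ are immediate: the defining inequality $\mathbf{x}(A) \le \alpha f(A)$ is equivalent to $(\mathbf{x}/\alpha)(A) \le f(A)$, and the normalization $(\alpha f)(E) = \alpha f(E)$ yields the base-polytope version. For the additivity identity on independence polyhedra, the easy containment $\mathcal{P}(f_1) + \mathcal{P}(f_2) \subseteq \mathcal{P}(f_1 + f_2)$ follows by summing the rank inequalities coordinate-wise; the reverse containment is a classical result of~\citet{schrijver_2003} (Theorem~44.6 there), which I would invoke directly as stated in the lemma.

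The substantive work is the base-polytope additivity $\mathcal{B}(f_1) + \mathcal{B}(f_2) = \mathcal{B}(f_1+f_2)$. For $\subseteq$, if $\mathbf{x} \in \mathcal{B}(f_1)$ and $\mathbf{y} \in \mathcal{B}(f_2)$, then $\mathbf{x}+\mathbf{y}$ satisfies the subset inequalities of $f_1+f_2$, and $(\mathbf{x}+\mathbf{y})(E) = f_1(E) + f_2(E)$ saturates the equality constraint, so $\mathbf{x}+\mathbf{y} \in \mathcal{B}(f_1+f_2)$. For $\supseteq$, I would use \cref{lem:greedy2} to decompose vertices. Given any ordering $\pi$, the greedy algorithm run on $f_1+f_2$ produces a vertex $\mathbf{z}^\pi$ whose $\pi(k)$-th coordinate is the marginal
\[ (f_1+f_2)\bigl(\{\pi(1),\dots,\pi(k)\}\bigr) - (f_1+f_2)\bigl(\{\pi(1),\dots,\pi(k-1)\}\bigr), \]
which splits additively into the analogous marginals for $f_1$ and for $f_2$. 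Hence $\mathbf{z}^\pi = \mathbf{x}^\pi + \mathbf{y}^\pi$, where $\mathbf{x}^\pi$ and $\mathbf{y}^\pi$ are the greedy vertices of $\mathcal{B}(f_1)$ and $\mathcal{B}(f_2)$ for the same ordering $\pi$. An arbitrary $\mathbf{z} \in \mathcal{B}(f_1+f_2)$ is a convex combination $\sum_j \lambda_j \mathbf{z}^{\pi_j}$ of such vertices by the Minkowski-Weyl theorem, and so $\mathbf{z} = \sum_j \lambda_j \mathbf{x}^{\pi_j} + \sum_j \lambda_j \mathbf{y}^{\pi_j}$ expresses $\mathbf{z}$ as a sum of a point in $\mathcal{B}(f_1)$ and a point in $\mathcal{B}(f_2)$, using convexity of each base polytope.

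The only delicate step is the vertex decomposition for the base polytope, and that step is handed to us by \cref{lem:greedy2}; no further obstacle arises. The argument crucially relies on the fact that the greedy algorithm operates via marginal increments that are additive in the rank function, which is exactly why the base polytopes behave so well under addition of rank functions and why the same identity would fail under more general coordinate-wise rescalings (as flagged in the discussion of \cref{lem:mat-poly}).
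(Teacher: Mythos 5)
Your proof is correct and follows the same route the paper sketches: invoke Schrijver for the independence-polytope additivity, and establish base-polytope additivity by decomposing each greedy vertex of $\mathcal{B}(f_1+f_2)$ (via \cref{lem:greedy2}) as the sum of the corresponding greedy vertices of $\mathcal{B}(f_1)$ and $\mathcal{B}(f_2)$ for the same ordering, then pass to convex combinations. The scaling identities are handled the same elementary way.
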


\paragraph{Majorization.} The following lemma captures the relation between polymatroids and majorization. %and is proved in \cite{tamir_least_1995}. 

\begin{lemma}[\citet{tamir_least_1995}]\label{thm:poly-major}
    Any polymatroid $(E,\rho)$ has a $1$-majorized element that lies in $\mathcal{B}(\rho)$. 
\end{lemma}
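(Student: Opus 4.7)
The plan is to realize the $1$-majorized base as the unique minimizer of a strictly convex symmetric objective over $\mathcal{B}(\rho)$; specifically, let
\[
\mathbf{x}^* \;:=\; \arg\min_{\mathbf{x} \in \mathcal{B}(\rho)} \sum_{i \in E} x_i^2.
\]
Since $\rho$ is non-negative, monotone, and submodular with $\rho(\emptyset)=0$, the base polytope $\mathcal{B}(\rho)$ is non-empty (every ordering yields a greedy base by \cref{lem:greedy2}) and compact, so $\mathbf{x}^*$ exists and, by strict convexity, is unique. I will show that $\mathbf{x}^*$ is $1$-majorized: for every $k$ and every $\mathbf{y} \in \mathcal{B}(\rho)$, the sum of the $k$ smallest coordinates of $\mathbf{x}^*$ is at least the corresponding sum for $\mathbf{y}$.

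The central structural tool is the base-exchange property: for $\mathbf{x} \in \mathcal{B}(\rho)$ and any $i, j \in E$, the perturbation $\mathbf{x} + \varepsilon(\mathbf{e}_i - \mathbf{e}_j)$ stays in $\mathcal{B}(\rho)$ for sufficiently small $\varepsilon > 0$ unless some $\mathbf{x}$-tight set $S$ (with $\sum_{k \in S} x_k = \rho(S)$) separates them by $i \in S$, $j \notin S$. Submodularity also forces the family of tight sets to be closed under union and intersection, hence a lattice. Applying first-order optimality to $\mathbf{x}^*$: whenever $x_i^* > x_j^*$, every tight set containing $j$ must also contain $i$, for otherwise the swap moving infinitesimal mass from $i$ to $j$ strictly decreases $\sum_k (x_k^*)^2$. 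Consequently, the tight-set lattice induces a chain $\emptyset = S_0 \subsetneq S_1 \subsetneq \cdots \subsetneq S_r = E$ with $\sum_{i \in S_j} x_i^* = \rho(S_j)$ for each $j$, together with values $\lambda_1 < \lambda_2 < \cdots < \lambda_r$ such that $x_i^* = \lambda_j$ for all $i \in S_j \setminus S_{j-1}$.

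Given this structure, $1$-majorization follows block by block. For any $k = |S_j| + t$ with $0 \le t < |S_{j+1} \setminus S_j|$, the sum of the $k$ smallest entries of $\mathbf{x}^*$ equals $\rho(S_j) + t\lambda_{j+1}$. For any $\mathbf{y} \in \mathcal{B}(\rho)$, the polymatroid constraint $\sum_{i \in S_j} y_i \le \rho(S_j)$ dominates the first $|S_j|$ coordinates; for the remaining $t$ coordinates, one passes to the contracted polymatroid with rank function $\rho'(B) = \rho(B \cup S_j) - \rho(S_j)$, whose minimum density on subsets of $S_{j+1} \setminus S_j$ is exactly $\lambda_{j+1}$ by the tight-set structure. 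The main obstacle I anticipate is precisely this last step: $\mathbf{y}$ need not be tight on $S_j$, so $\mathbf{y}|_{E \setminus S_j}$ need not a priori lie in $\mathcal{B}(\rho')$, and the sum-of-$k$-smallest functional is nonlinear. The fix is to modify $\mathbf{y}$ via base exchanges into a $\tilde{\mathbf{y}} \in \mathcal{B}(\rho)$ that is tight on $S_j$ and whose sum of $k$ smallest coordinates is at least that of $\mathbf{y}$; constructing such a $\tilde{\mathbf{y}}$ uses iterated application of the exchange property, crucially invoking submodularity, and corresponds to the principal-partition (Dilworth-truncation) decomposition of Fujishige.
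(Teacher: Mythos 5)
The paper states this as a cited result from Tamir (1995) and gives no proof, so there is nothing internal to compare against; I assess your proposal on its own terms. The overall plan — take the $\ell_2$-minimal base, extract the level-set chain from its tight sets, and then bound prefix sums block by block — is the standard route to the least-majorized base (Fujishige's lexicographically optimal base), so the strategy is sound. However, your first-order optimality step is stated backwards, and as written it is actually \emph{inconsistent} with optimality. You claim that whenever $x_i^* > x_j^*$, ``every tight set containing $j$ must also contain $i$.'' Transferring mass from $i$ to $j$ (i.e.\ moving in direction $\mathbf{e}_j - \mathbf{e}_i$) strictly decreases $\sum_k x_k^2$ when $x_i^* > x_j^*$, and by the exchange property you quoted this move is blocked precisely when some tight $S$ has $j \in S$, $i \notin S$. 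Optimality therefore forces the \emph{existence} of a tight set containing $j$ but not $i$ — the negation of what you wrote. (If every tight set containing $j$ also contained $i$, the move would be feasible and $\mathbf{x}^*$ would not be optimal.) The corrected statement does give the chain: for $a \in S_j$ and $b \notin S_j$ one has $x_a^* < x_b^*$ and hence a tight $T_{a,b}$ with $a \in T_{a,b}$, $b \notin T_{a,b}$; then $\bigcup_{a \in S_j}\bigcap_{b \notin S_j} T_{a,b} = S_j$ is tight by lattice closure. You need this fix for the rest of the argument to stand.

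Separately, the ``main obstacle'' you anticipate in the final step is not an obstacle, and the base-exchange/Dilworth-truncation modification of $\mathbf{y}$ you sketch is unnecessary. Write $L_{j+1} = S_{j+1}\setminus S_j$, $m = |L_{j+1}|$, and let $B \subseteq L_{j+1}$ be the $t$ indices with smallest $y$-values. The sum of the $k = |S_j|+t$ smallest entries of $\mathbf{y}$ is at most $\sum_{i \in S_j} y_i + \sum_{i \in B} y_i$, and since the $t$ smallest of $m$ nonnegative numbers sum to at most $\tfrac{t}{m}$ of the total,
\[
\sum_{i \in S_j} y_i + \sum_{i \in B} y_i \;\le\; \Bigl(1 - \tfrac{t}{m}\Bigr)\sum_{i \in S_j} y_i + \tfrac{t}{m}\sum_{i \in S_{j+1}} y_i \;\le\; \Bigl(1 - \tfrac{t}{m}\Bigr)\rho(S_j) + \tfrac{t}{m}\rho(S_{j+1}) \;=\; \rho(S_j) + t\lambda_{j+1},
\]
where the last equality uses that $S_j$ and $S_{j+1}$ are $\mathbf{x}^*$-tight, so $\rho(S_{j+1}) - \rho(S_j) = m\lambda_{j+1}$. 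This is exactly the sum of the $k$ smallest entries of $\mathbf{x}^*$, proving $1$-majorization directly for every $\mathbf{y} \in \mathcal{B}(\rho)$, with no need to make $\mathbf{y}$ tight on $S_j$ or pass to a contracted polymatroid.
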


\section{Full Revelation Signaling Policies}
\label{sec:full-rev}

We first show the existence of $1$-majorized solution for the more restricted \textit{Full Revelation Policies}, and show an approximation algorithm to compute it. This will form the basis of the proof of approximate majorization (and the associated computational result) for general policies in \cref{sec:approx}. 

The mapping rule of a full revelation policy is directly sending the realized value to the receiver, and the signaling policy involves designing the selection rule for the receiver. Assume that there are only finitely many possible values for all agents, denoted by $v_1 > v_2 >\cdots >v_k > 0$. Let $v_{a_i}$ be the value sent by agent $i$. In this setting, note that if $v_{a_i}$ is the revealed value of agent $i$, then the posterior mean is simply $\mu_i = v_{a_i}$, and the receiver selects an $\mathbf{x} \in \mathcal{P}(f)$ that maximizes $\sum_i v_{a_i} x_i$, breaking ties in favor of the majorization objective. This yields expected utility vector $\{U_i(\Omega)\}_{i=1}^n$ for the agents.

\subsection{Existence of a 1-Majorized Solution}\label{sec:exist}
We will show the following theorem.
\begin{theorem}[Existence of $1$-majorized policy]\label{thm:maj-policy}
    Assume the allocation constraints define a polymatroid $(E,f)$. For the class of full-revelation signaling policies, the set of expected utility vectors $\{U_i(\Omega)\}_{i=1}^n$ for signaling policies $\Omega$ has a $1$-majorized point. 
\end{theorem}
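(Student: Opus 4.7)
The plan is to reduce the theorem to Tamir's existence result for polymatroids (\cref{thm:poly-major}) by establishing the structural statement flagged in the introduction as \cref{lem:mat-poly}: the set of expected utility vectors achievable by full-revelation policies is itself the base polytope of some polymatroid $(E,\rho)$. Once this is in place, \cref{thm:poly-major} immediately produces a $1$-majorized point $U^\star \in \mathcal{B}(\rho)$, and any signaling policy realizing $U^\star$ proves the theorem.

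To set up, I would index a scenario by a vector $\vec{a}=(a_1,\ldots,a_n)$ of value realizations, occurring with probability $p(\vec{a})=\prod_i \Pr[v_i=v_{a_i}]$ by independence across agents. Under full revelation, in scenario $\vec{a}$ the receiver maximizes $\sum_i v_{a_i} x_i$ over $x\in\mathcal{P}(f)$, so the set of welfare-optimal allocations is a face $F(\vec{a})\subseteq \mathcal{P}(f)$. By \cref{lem:greedy1} and \cref{lem:greedy2}, $F(\vec{a})$ is itself the base polytope of a polymatroid $(E,f_{\vec{a}})$, with vertices obtained by the greedy algorithm on any ordering $\pi$ satisfying $v_{a_{\pi(1)}} \ge v_{a_{\pi(2)}} \ge \cdots \ge v_{a_{\pi(n)}}$. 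The sender is free to randomize inside $F(\vec{a})$, so the set of achievable expected utility vectors is the Minkowski sum
\[
\mathcal{U} \;=\; \sum_{\vec{a}} p(\vec{a})\, \bigl(v_{\vec{a}} \odot F(\vec{a})\bigr),
\]
where $\odot$ denotes coordinatewise scaling by the value vector $v_{\vec{a}} = (v_{a_1},\ldots,v_{a_n})$.

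The main obstacle, and the crux of the argument, is showing that each scaled face $v_{\vec{a}} \odot F(\vec{a})$ is itself a base polytope, since coordinatewise rescaling of a polymatroid is generally not polymatroidal, as emphasized in the discussion around \cref{lem:mat-poly}. The crucial observation is that the greedy orderings parameterizing $F(\vec{a})$ must sort agents in nonincreasing value, so the per-coordinate scalings $v_{a_i}$ are compatible with the greedy structure: applying Abel summation to a scaled greedy vertex rewrites it as a nonnegative combination of prefix rank increments of $f_{\vec{a}}$ weighted by the monotone-nonincreasing value sequence. I would use this to define an explicit rank function $\hat f_{\vec{a}}$ whose base polytope equals $v_{\vec{a}} \odot F(\vec{a})$, then verify submodularity of $\hat f_{\vec{a}}$ from submodularity and monotonicity of $f$ together with the fixed value-sort order. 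This single-scenario statement is the substance of \cref{lem:mat-poly}, and is where the polymatroid hypothesis and the welfare-maximizer assumption interact nontrivially.

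With the single-scenario case in hand, \cref{cor:mon-sub} closes the argument: nonnegative combinations of submodular rank functions are submodular, so $\rho := \sum_{\vec{a}} p(\vec{a}) \hat f_{\vec{a}}$ is itself a polymatroid rank function with $\mathcal{B}(\rho) = \mathcal{U}$. Applying \cref{thm:poly-major} to $(E,\rho)$ yields a $1$-majorized $U^\star \in \mathcal{B}(\rho) = \mathcal{U}$, and decomposing $U^\star$ along the Minkowski-sum representation recovers the per-scenario selection rule and hence a full-revelation policy $\Omega^\star$ with $U(\Omega^\star) = U^\star$, completing the proof.
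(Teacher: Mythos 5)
Your proposal is correct and follows essentially the same route as the paper: reduce to the single-scenario statement (\cref{lem:mat-poly}), take the probability-weighted Minkowski sum of base polytopes across scenarios via \cref{cor:mon-sub}, and then invoke \cref{thm:poly-major}. The identification of the crux (that $v_{\vec{a}}\odot F(\vec{a})$ is itself a base polytope, and that this is nontrivial because coordinatewise scaling of a polymatroid is generally not a polymatroid) is exactly right.

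The one place where you deviate, and where your sketch needs to be fleshed out with care, is the derivation and verification of $\hat f_{\vec{a}}$. The paper derives the rank function directly as the saturation function $g(S)=\max_{u\in\mathcal{U}}\sum_{i\in S}u_i$, computes a closed form via greedy with a tie-break prioritizing $S$, and then proves $\mathcal{B}(g)=\mathcal{U}$ by separate inclusions; the direction $\mathcal{B}(g)\subseteq\mathcal{U}$ is handled by a vertex-realizability argument and, as the appendix emphasizes, does \emph{not} follow merely from submodularity of the saturation function of a constant-sum convex set. Your Abel-summation idea is a legitimate alternative: writing $v_i=\sum_{l\ge m(i)}\delta_l$ with $\delta_l=v^{(l)}-v^{(l+1)}\ge 0$ across blocks, the face $F(\vec{a})$ factors as a product $\prod_m\mathcal{B}(f_m)$ of block contractions (because the greedy orderings must respect the value sort), and $v_{\vec{a}}\odot F(\vec{a})$ then becomes the Minkowski sum $\sum_l\delta_l\,\mathcal{B}(\rho_l)$ of prefix base polytopes, which is automatically a base polytope by \cref{cor:mon-sub} and recovers the same rank function $g(S)=\sum_j v^{(j)}\bigl[f(P_{<j}\cup(S\cap E_j))-f(P_{<j})\bigr]$. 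If you package the Abel step as such a Minkowski-sum decomposition, you bypass the delicate $\mathcal{B}(g)\subseteq\mathcal{U}$ direction entirely; if instead you only use Abel to guess $\hat f_{\vec{a}}$ and then ``verify'' the polytope equality, be aware you will have to reproduce something like the paper's realizability argument, since submodularity alone is not enough.
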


Fix a set of realized values of the agents. We will show that the vector of utilities of the agents forms the base polytope of a different polymatroid. 

%\km{I think we can replace \cref{lem:mat-poly} with this and change the proof accordingly. Base polyhedra should also be preserved by scaling and Minkowski sums. And optimizing over a base polytope should be efficient; AI suggests look up Chapter 4.2 of Fujishige's textbook or Chapter 44.4 of Schrijver.}

\begin{lemma}
\label{lem:mat-poly}
Let $\mathcal{P} = \mathcal{P}(f)$ be a polymatroid on a ground set $E$, defined by a submodular rank function $f$. Let $\Vec{v}$ be a strictly positive vector of agent values. Let $\mathcal{X}^*$ be the face of optimal allocations in $\mathcal{P}$ that maximize the welfare function $v \cdot x$. Let the corresponding set of utility vectors be $\mathcal{U} = \{ (v_i x_i)_{i \in E} \mid x \in \mathcal{X}^* \}$. Then, the set $\mathcal{U}$ is the base polytope of a submodular function.
\end{lemma}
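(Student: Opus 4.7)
The plan is to exploit the greedy characterization of the optimal face (\cref{lem:greedy1}) to decompose $\mathcal{X}^*$ into a product of base polytopes on disjoint blocks on which $\vec{v}$ is constant, and then apply the uniform-scaling identity $\mathcal{B}(\alpha f) = \alpha \mathcal{B}(f)$ within each block (\cref{cor:mon-sub}). First, list the distinct values of $\vec{v}$ as $w_1 > w_2 > \cdots > w_m$ and group the agents into blocks $E_j = \{i \in E : v_i = w_j\}$. Writing $S_j = E_1 \cup \cdots \cup E_j$, define $f_j : 2^{E_j} \to \mathbb{R}_{\ge 0}$ by $f_j(A) = f(S_{j-1} \cup A) - f(S_{j-1})$; standard contraction calculus shows each $f_j$ is non-negative, monotone, and submodular.

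The key step is to establish
\[
\mathcal{X}^* \;=\; \{x \in \mathcal{P}(f) : x(S_j) = f(S_j) \text{ for all } j\} \;=\; \mathcal{B}(f_1) \times \cdots \times \mathcal{B}(f_m).
\]
The first equality follows from \cref{lem:greedy1}: every vertex of $\mathcal{X}^*$ arises from a greedy ordering that places $E_1$ before $E_2$ before $\cdots$ before $E_m$, hence satisfies $x(S_j) = f(S_j)$ for every $j$; conversely, any $x \in \mathcal{P}(f)$ satisfying these equalities attains the common optimal value $\sum_j w_j(f(S_j) - f(S_{j-1}))$. For the second equality, the forward inclusion is immediate: for $A \subseteq E_j$, disjointness gives $x(A) = x(S_{j-1} \cup A) - x(S_{j-1}) \le f(S_{j-1} \cup A) - f(S_{j-1}) = f_j(A)$, with equality when $A = E_j$. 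For the reverse, given $y_j \in \mathcal{B}(f_j)$ and letting $x$ agree with $y_j$ on each $E_j$, I would verify $x \in \mathcal{P}(f)$ via the telescoping submodularity bound: for any $A \subseteq E$ with $A_j = A \cap E_j$,
\[
f(A) = \sum_{j=1}^m \bigl( f(A_1 \cup \cdots \cup A_j) - f(A_1 \cup \cdots \cup A_{j-1}) \bigr) \;\ge\; \sum_{j=1}^m \bigl( f(S_{j-1} \cup A_j) - f(S_{j-1}) \bigr) \;\ge\; x(A),
\]
where the first inequality is submodularity applied to the inclusions $A_1 \cup \cdots \cup A_{j-1} \subseteq S_{j-1}$, and the second is $y_j(A_j) \le f_j(A_j)$.

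With the decomposition in hand, the map $x \mapsto (v_i x_i)_i$ scales each block uniformly by $w_j$, so
\[
\mathcal{U} = \prod_{j=1}^m w_j \mathcal{B}(f_j) = \prod_{j=1}^m \mathcal{B}(w_j f_j).
\]
Defining $F : 2^E \to \mathbb{R}_{\ge 0}$ by $F(A) = \sum_{j=1}^m w_j f_j(A \cap E_j)$ yields a non-negative, monotone, submodular function (a non-negative sum of such functions on disjoint ground sets), and a routine check shows $\mathcal{B}(F)$ coincides with this product, giving $\mathcal{U} = \mathcal{B}(F)$.

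The principal obstacle is the product decomposition of $\mathcal{X}^*$ itself: coordinate-wise scaling of a generic polymatroid does not preserve polymatroidality (extra implicit inequalities can appear), so the entire argument rests on $\vec{v}$ being constant within each block of the optimal-face factorization, which is exactly what \cref{lem:greedy1} guarantees. Once this structural decomposition is in place, the remaining steps are straightforward applications of the submodular calculus recalled in \cref{cor:mon-sub}.
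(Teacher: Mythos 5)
Your proof is correct and takes a genuinely different route from the paper's. The paper works directly with the saturation function $g(S) = \max_{u\in\mathcal{U}}\sum_{i\in S} u_i$: it computes a closed form via a greedy tie-breaking argument, checks submodularity term by term, and then verifies $\mathcal{U} = \mathcal{B}(g)$ by showing every vertex of $\mathcal{B}(g)$ is realized by some tie-break order of the receiver. You instead establish the cleaner structural fact that $\mathcal{X}^*$ factors as a product $\mathcal{B}(f_1)\times\cdots\times\mathcal{B}(f_m)$ of base polytopes of contractions of $f$, and then observe that the scaling by $\vec v$ is \emph{uniform} on each factor, so $\mathcal{U} = \prod_j \mathcal{B}(w_j f_j) = \mathcal{B}(F)$ via the scaling identity of \cref{cor:mon-sub}. (Unwinding definitions, your $F$ coincides with the paper's $g$.) Your version makes explicit exactly why the potentially problematic coordinate-wise scaling is harmless here — it degenerates to a block-uniform scaling on the factors of the optimal face — which is the insight the paper flags as the subtlety but establishes more implicitly through the greedy computation. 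Each small step (the first equality via \cref{lem:greedy1} and Abel summation, the forward inclusion of the product via disjointness, the reverse via the telescoping-submodularity bound, and the final $\mathcal{B}(F) = \prod_j\mathcal{B}(w_jf_j)$ check) is sound. The main thing worth noting if you were to write this up: you should state explicitly that the right-hand side of your first equality is convex (intersection of $\mathcal{P}(f)$ with hyperplanes), since \cref{lem:greedy1} only constrains the \emph{vertices} of $\mathcal{X}^*$ and you need the containment to pass to the convex hull; you implicitly do this, but it is the one place a reader might want the step spelled out.
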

\begin{proof}
We define the saturation function $g: 2^E \to \mathbb{R}_+$ associated with $\mathcal{U}$ as:
\[ g(S) = \max_{u \in \mathcal{U}} \sum_{i \in S} u_i = \max_{x \in \mathcal{X}^*} \sum_{i \in S} v_i x_i, \]
where $v_i$ is the $i^{th}$ coordinate of $\Vec{v}$ and $u_i = v_i x_i$ is the $i^{th}$ coordinate of $u$. This function is trivially monotone. We will show that $g$ is submodular. We do so by deriving a closed-form expression for $g(S)$. Subsequently, we will show that $\mathcal{U} = \mathcal{B}(g)$, completing the proof.

By \cref{lem:greedy1}, the vertices of the optimal face $\mathcal{X}^*$ are generated by the polymatroid greedy algorithm for the objective vector $\Vec{v}$, with different outcomes arising from different tie-breaking orders for agents with the same value $v_i$. The value $g(S)$ is therefore achieved at the vertex obtained by running the greedy algorithm with a tie-breaking rule that prioritizes maximizing the utility from the set $S$.

Let the agents $E$ be partitioned into blocks $E_1, E_2, \dots, E_k$ where all agents in a block $E_j$ have the same value $v_j$, and $v_1 > v_2 > \dots > v_k > 0$. The greedy algorithm proceeds through these blocks sequentially. To find the value of $g(S)$, we define a specific permutation $\pi_S$ as follows: For each block $E_j$, agents in $S_j = S \cap E_j$ are processed first. Agents in $E_j \setminus S_j$ are processed next. Within these subsets, any fixed arbitrary order is used. The allocation vector for this permutation is $x(\pi_S)$, and $g(S) = \sum_{i \in S} v_i x_i(\pi_S)$. The total utility from the agents in $S$ is therefore:
\[ g(S) = \sum_{j=1}^k \sum_{i \in S_j} v_j x_i(\pi_S) = \sum_{j=1}^k v_j \left( \sum_{i \in S_j} x_i(\pi_S) \right). \]
We now analyze the inner sum for a single block $j$. Let $P_{<j} = E_1 \cup \dots \cup E_{j-1}$ be the set of all agents in higher-value blocks. Let the processing order for agents in $S_j$ be $s_1, s_2, \dots, s_m$. Using the greedy allocation for the priority rule discussed above, we have:
\begin{align*}
\sum_{i \in S_j} x_i(\pi_S) & = \sum_{j=1}^m x(s_j) = \sum_{j=1}^m  \left(f(P_{<j} \cup \{s_1, \ldots,s_j\}) - f(P_{<j} \cup \{s_1, \ldots, s_{j-1}\}) \right) \\
& = f(P_{<j} \cup S_j) - f(P_{<j}).
\end{align*}

Substituting this back into the expression for $g(S)$, we arrive at the closed-form formula:
\[ g(S) = \sum_{j=1}^k v_j \left[ f(P_{<j} \cup (S \cap E_j)) - f(P_{<j}) \right]. \]

 Let $g_j(S) = v_j [ f(P_{<j} \cup (S \cap E_j)) - f(P_{<j}) ]$. To show this function is submodular, we only need to note that $h_j(S) = f(P_{<j} \cup (S \cap E_j))$ is submodular.  Since each $g_j(S)$ is submodular, their sum $g(S) = \sum_j g_j(S)$ is also submodular.

Let $\mathcal{B}(g)$ now denote the base polytope of the polymatroid with rank function $g$. We will now show that $\mathcal{U} = \mathcal{B}(g)$, completing the proof. 

First, by definition, $\mathcal{X}^*$ is the set of allocations $x \in \mathcal{P}$ that maximize $v \cdot x$.  Let this maximum welfare be $W_{\max}$. Thus, for any $x \in \mathcal{X}^*$, we have $v \cdot x = W_{\max}$. Now, consider any $u \in \mathcal{U}$. By definition, $u_i = v_i x_i$ for some $x \in \mathcal{X}^*$, so that  $\mathcal{U}$ lies on the hyperplane $\{z \in \mathbb{R}^n \mid \sum z_i = W_{\max}\}$.  Further, for any $u \in \mathcal{U}$ and any $S \subseteq E$, we have $\sum_{i \in S} u_i \le g(S)$ by definition of $g(S)$. These two observations imply $\mathcal{U} \subseteq \mathcal{B}(g)$.  Further,  $\mathcal{U}$ is convex, since $\mathcal{X}^*$ is a face of $\mathcal{P}$ and is hence  convex. 

We will finally show that $\mathcal{B}(g) \subseteq \mathcal{U}$ by showing that any vertex of $\mathcal{B}(g)$ corresponds to a feasible realization of utilities; since $\mathcal{U}$ is convex, this will imply any interior point of $\mathcal{B}(g)$ also lies in $\mathcal{U}$. By \cref{lem:greedy2}, any vertex of $\mathcal{B}(g)$ can be obtained by ordering the elements of $E$; relabel them as $1,2,\ldots,n$ in this ordering, and setting $u_i = g([i]) - g([i-1])$. Let $S = [i-1]$, and let $i \in E_j$ as defined above. When we add $i$ to $S$, it can be checked that $u_i = v_j \cdot \left( f(P_{<j} \cup \{i\} \cup (S \cap E_j)) -  f(P_{<j} \cup (S \cap E_j)) \right)$. This corresponds to the receiver assigning $x_i = f(P_{<j} \cup \{i\} \cup (S \cap E_j)) -  f(P_{<j} \cup (S \cap E_j))$, that is, placing $i$ next in the tie-break ordering for $E_j$ after the elements of $P_{<j}$ and $S \cap E_j$, and allocating greedily. Therefore, any vertex of $\mathcal{B}(g)$ corresponds to a feasible realization of utilities by some tie-breaking rule of the receiver. This implies $\mathcal{B}(g) \subseteq \mathcal{U}$. Therefore, $\mathcal{U}$ is the base polytope $\mathcal{B}(g)$ for the submodular function $g$.
\end{proof}

\begin{proof}[Proof of \cref{thm:maj-policy}]
Note that $\{U_i(\Omega)\}_{i=1}^n$ is the Minkowski sum of the vectors $\{\Pr[\sigma] \cdot U_i(\sigma)\}$, where $\sigma$ are scenarios of realized values.  By \cref{cor:mon-sub}, scaling a base polytope by a constant is also a base polytope, and so is taking Minkowski sums. Combining with \cref{lem:mat-poly}, this means the set $\{U_i(\Omega)\}_{i=1}^n$ also defines the base polytope of a polymatroid, and by \cref{thm:poly-major}, this has a $1$-majorized point. 
%    Combining \cref{thm:poly-major} and \cref{lem:mat-poly}, we have that \[\conv(\{\{U_i(\Omega)\}_{i=1}^n: \Omega \text{ signaling policy}\}\cup \{\mathbf{0}\})\] has a $1$-majorized point. Since every point in the convex hull is dominated by a utility vector, the $1$-majorized point is the utility vector $\{U_i(\Omega^*)\}_{i=1}^n$ for some feasible policy $\Omega^*$. 
\end{proof}

Indeed, by combining \cref{cor:mon-sub} with \cref{lem:mat-poly}, the set of vectors  of expected utilities $\vec{U}$ of the agents for feasible signaling policies coincides with the base polytope of the following polymatroid $\mathcal{R}$. Here, $g(S;\vec{v})$ is the function $g(S)$ from \cref{lem:mat-poly} when the realized value vector is $\vec{v}$. The expectation below is over the realized value vector.
\begin{equation}
\label{eq:poly1}
\mathcal{R} = \left\{ \vec{y} \ge 0\  \big| \ \sum_{i \in S} y_i \le \E_{\vec{v}}\left[g(S;\vec{v})\right] \ \forall S \subseteq [n]\ \right\}.
\end{equation}
\cref{lem:mat-poly} implies $\E[g(S;v)]$ is submodular, so that the above set of constraints define a polymatroid, and has a $1$-majorized point.

\paragraph{Remarks.} We note that the proof of \cref{lem:mat-poly} is quite delicate. In \cref{app:sec3}, we present two examples to support this. First, we show that it crucially needs polymatroidal structure of the allocation set and present an instance where the statement not hold for a non-polymatroid allocation set, even when this set has a $1$-majorized point. Further,  the result $\mathcal{B}(g) \subseteq \mathcal{U}$ is not a general result for any convex polytope $\mathcal{U}$ whose saturation function $g$ is monotone and submodular  and whose coordinates sum to a fixed value. Here, for any $S$, the saturation function $g(S)$ is the maximum over $\mathcal{U}$ of the sum of the coordinates in $S$. Our proof uses specific properties of the way $\mathcal{U}$ is defined in our setting. We present an example of a constant coordinate-sum convex polytope $\mathcal{U}$ whose saturation function $g$ is monotone and submodular, but that strictly lies inside $\mathcal{B}(g)$.
\subsection{Polynomial Time  Approximation Scheme}\label{sec:poly-time}

We will next show a polynomial time additive approximation to compute the $1$-majorized point.

\begin{theorem}\label{thm:poly-alg}
    In the full information revelation setting, we can compute a policy that approximates the $1$-majorized vector of utilities to an additive $O(\delta)$ in time \mbox{poly}$(n,V,1/\delta)$.
\end{theorem}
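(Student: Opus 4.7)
The plan is to cast the computation of the $1$-majorized point in the polymatroid $\mathcal{R}$ from \cref{eq:poly1} as a covering LP and solve it via the multiplicative weights update (MWU) method, combined with Monte Carlo sampling to handle the fact that the rank function $\E_{\vec{v}}[g(S; \vec{v})]$ is an expectation over exponentially many scenarios. By the standard Goel--Meyerson characterization, a vector $\vec{y} \in \mathcal{R}$ is $1$-majorized iff for every $k \in [n]$ the sum of its $k$ smallest coordinates meets the optimum $L_k = \max_{\vec{z} \in \mathcal{R}} \sum_{i=1}^k z_{(i)}$. First I would compute an additive $O(\delta)$ approximation $\widehat{L}_k$ to each $L_k$ by solving the saddle-point problem $\max_{\vec{y} \in \mathcal{R}} \min_{\vec{z} \in [0,1]^n,\ \sum z_i = k} \vec{y}^\top \vec{z}$ with MWU on the $\vec{z}$ player and a linear-optimization oracle for the $\vec{y}$ player, and then use a second MWU invocation to find $\vec{y}^* \in \mathcal{R}$ that simultaneously satisfies $\sum_{i=1}^k y^*_{(i)} \geq \widehat{L}_k - O(\delta)$ for all $k$; feasibility of this system is guaranteed by \cref{thm:maj-policy,thm:poly-major}.

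The key computational primitive is the linear-optimization oracle over $\mathcal{R}$. For any $\vec{w} \geq \vec{0}$, maximizing $\vec{w}^\top \vec{y}$ over $\mathcal{R}$ is accomplished by the polymatroid greedy algorithm applied to the rank function $\E_{\vec{v}}[g(\cdot; \vec{v})]$: sort $\vec{w}$ in decreasing order and compute successive marginals. By linearity of expectation, these marginals equal expectations over $\vec{v}$ of the corresponding marginals of $g(\cdot; \vec{v})$, each of which has the explicit closed form derived in the proof of \cref{lem:mat-poly} and is thus computable in $\mathrm{poly}(n)$ time per realization. Since $v_i \in [1, V]$ and $g(S; \vec{v}) \leq V \cdot f(E)$, drawing $T = \mathrm{poly}(n, V, 1/\delta)$ i.i.d.\ samples of $\vec{v}$ from the product prior and applying Hoeffding's inequality yields an empirical rank function that approximates the true one to additive error $O(\delta)$ on any fixed subset $S$, with high probability.

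The main obstacle will be ensuring that sampling error does not compound across MWU iterations. The standard remedy is to draw a single fixed sample of $T$ scenarios upfront and run every subsequent MWU step against the induced empirical distribution; a union bound over the polynomially many greedy queries generated during the entire computation then guarantees that the empirical rank function is uniformly $O(\delta)$-close to the true one on all queried subsets, so the approximate oracle never misleads the algorithm by more than additive $O(\delta)$ per iteration. Combining this uniform approximation with the $\mathrm{poly}(1/\delta, \log n)$-iteration convergence of MWU and the $\mathrm{poly}(n)$ cost per greedy call yields the claimed $\mathrm{poly}(n, V, 1/\delta)$ total running time.
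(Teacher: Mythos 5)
Your proposal follows essentially the same route as the paper's: cast the problem as a Goel--Meyerson LP for each prefix sum $Q_k$, solve it with multiplicative weights where the oracle is a polymatroid greedy step, estimate the expectation over scenarios by Monte Carlo sampling and Hoeffding, binary search on the target value, and finally re-run MWU to hit all $n$ prefix targets simultaneously. The one genuine difference is that you work directly with the aggregate polymatroid $\mathcal{R}$ from \cref{eq:poly1}, whereas the paper keeps the per-scenario allocation variables $\{v_{a_i} x_{i\vec{v}}\}$ explicit in the LP and takes a Lagrangian of the coupling constraints (following the Bhalgat et al.\ template). Your abstraction is cleaner, and it is sound because submodularity is preserved under averaging so the sampled empirical rank function remains a polymatroid; your observation about fixing one sample set upfront and union-bounding over the polynomially many greedy queries is the right way to control error accumulation.

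There is one gap worth closing: the theorem asks you to compute a \emph{policy}, not just the utility vector $\vec{y}^*$. Because you have abstracted away the per-scenario allocations into $\mathcal{R}$, your output as stated is only a point in $\mathbb{R}^n$. You should say explicitly how to recover a selection rule from it. The fix is short and follows from the structure you already have: each best response produced by your $\vec{y}$-oracle is a greedy vertex of $\mathcal{B}(\mathcal{R})$, hence is indexed by a permutation $\pi$ of the agents; by the proof of \cref{lem:mat-poly} and the Minkowski-sum decomposition of $\mathcal{R}$, that permutation corresponds to a concrete tie-breaking rule for the receiver's greedy allocation in \emph{every} scenario $\vec{v}$. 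The MWU average of the iterates is therefore realized by the policy that samples one of the $T$ iterations uniformly and uses its permutation as the tie-break order (equivalently, the paper's device of sampling a random time step and using its dual weights as the tie-break priority). Without this step the argument produces a target utility vector but not a mechanism achieving it. Also note that your inner minimization over $\{z \in [0,1]^n : \sum_i z_i = k\}$ is not over a simplex, so if you literally run MWU on the $\vec{z}$ player you need a capped-simplex variant or mirror descent with the appropriate projection; alternatively, since the inner minimum has a closed form (sort and take the bottom $k$), you can avoid an inner MWU entirely, which is effectively what the paper's LP with the auxiliary variables $U_i', M$ does.
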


%\km{This section needs an overview. It is written as a sequence of steps and it is easy for a reader to be lost in what is going on.}

The above theorem also shows that the approximation ratio can easily be made multiplicative $(1+\delta)$ if the running time is $\mbox{poly}(n, V/\delta, \OPT_n/\OPT_1)$, where $\OPT_1$ is the max-min fair utility value and $\OPT_n$ is the social welfare. 

\paragraph{Proof of \cref{thm:poly-alg}.} In the rest of the section, we provide a proof sketch of \cref{thm:poly-alg}. Assume that all values are normalized so that the smallest value is $1$ and the largest value is $V$. The algorithm is an application of the multiplicative weights update framework. %presented in detail in \cref{sec:mult}. 
We use this to compute the maximum sum of each of the smallest $k$ utilities by binary sum, and then use the framework again with these values to compute a sequence of policies that can be computed efficiently by sampling, whose average approximates the majorized solution to an additive $\delta$. Since the details of our approach are very similar to that in~\citet{bhalgat_optimal_2013}, we only present a sketch and omit the details. % We first show the general LP for finding majorized solutions. 
%, and a parameter $L$. The parameter $L$ gives a bound on the probability of every agent taking a value: each agent $i$ takes value $v$ with probability either $0$ or $1/L$.  

%\km{What is $\vec{z}$ in the program below? It does not connect to the rest of the program, so it is very confusing! Maybe omit this program, but use the one below it directly?}

%\paragraph{LP Formulation for Majorization.} Consider a linear program from \cite{goel_simultaneous_2006}. Let $x_1,\dots, x_n$ be the variables we concern. Let $\Vec{x}=(x_1,\dots, x_n)$. The variable vector $\Vec{z}$ has $\Vec{v}$ as a prefix and possibly other variables. For matrix $A$ and constant vector $\Vec{b}$, $A\Vec{z}\leq \Vec{b}$ defines a region of feasible $\Vec{z}$. The projection of this region defines a region of feasible $\Vec{x}$, which we call $P$. Consider the following linear program:
%\begin{gather*}
%    \text{Maximize }(\sum_{i=1}^n x_i')-(n-j)M\text{ subject to:}\\
%    A\Vec{z}\leq b\\
%    x_i'\leq x_i\text{ for all }i\in\{1,\dots, n\}\\
%    x_i'\leq M\text{ for all }i\in \{1,\dots, n\}.
%\end{gather*}

\paragraph{LP for the Optimal Prefix Sum of Utilities.}
%\begin{definition}
    Given a vector $\Vec{x}=(x_1,\dots, x_n)$, let the $i$-th smallest element of $x$ be $x_{(i)}$. We define $Q_j(\Vec{x})=\sum_{i=1}^j x_{(i)}$. Recall that $\Vec{x}$ is majorized by $\Vec{y}$ or $\Vec{y}$ $\alpha$-majorizes $\Vec{x}$ if $\alpha\cdot Q_j(\Vec{y})\geq Q_j(\Vec{x})$.
%\end{definition}

Let $\Vec{v}=\{v_{a_i}\}_{i=1}^n$ be the realized values of the agents. Using a result in~\citet{goel_simultaneous_2006}, for every $1\leq j\leq n$, the program below finds $\max\{Q_j(\{U_i(\Omega)\}): \Omega\text{ feasible policy}\}$: 
\begin{gather*}
    \text{Maximize }\left(\sum_{i=1}^n U_i'\right)-(n-j)M\text{ subject to:}\\
     U_i\leq \E_{\Vec{v}}\bracket{\sum_{i=1}^m v_{a_i} x_{i\Vec{v}}}, \text{ for all } i\\
     \{v_{a_i} x_{i\Vec{v}}\}\in \mathcal{F}(\Vec{v}), \text{ for all } \Vec{v}\\
     U_i'\leq \min\{U_i,M\}, \text{ for all } i.
\end{gather*}
Here, $\mathcal{F}(\Vec{v})$ is the base polytope $\mathcal{B}(g)$ from the proof of \cref{lem:mat-poly} when the realized value vector is $\Vec{v}$. Let $\OPT^*_j$ denote the optimal objective to the above program.

\begin{lemma}[Lemma 3.1 in \citet{goel_simultaneous_2006}]\label{lem:lp}
    The linear program above finds $\max\{Q_j(\{U_i(\Omega)\}): \Omega\text{ feasible policy}\}$. 
\end{lemma}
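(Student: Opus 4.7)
My plan is to split the LP into two independent ingredients: a feasibility description that captures exactly the set of achievable expected utility vectors of full-revelation policies, and a linear-programming gadget that computes $Q_j(\cdot)$ of a fixed vector. The first ingredient is essentially already in hand via \cref{lem:mat-poly} and \cref{eq:poly1}; the second is a classical trick for sums of order statistics due to \citet{goel_simultaneous_2006}.

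For the feasibility side, I would invoke \cref{lem:mat-poly} to observe that for every realized value vector $\vec{v}$, the set of utility vectors $(v_{a_i} x_{i\vec{v}})_i$ achievable by some welfare-optimizing allocation $x_{\vec{v}} \in \mathcal{X}^*(\vec{v})$ is exactly the base polytope $\mathcal{F}(\vec{v}) = \mathcal{B}(g(\cdot;\vec{v}))$. Combining this with the Minkowski-sum argument used to derive \cref{eq:poly1} (and \cref{cor:mon-sub}) shows that the achievable expected utility vectors $\vec{U}$ of full-revelation signaling policies are precisely those with $U_i = \E_{\vec{v}}[v_{a_i} x_{i\vec{v}}]$ for some scenario-indexed collection $\{x_{\vec{v}}\}$ satisfying the LP constraints. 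Relaxing the equality $U_i = \E_{\vec{v}}[\cdot]$ to an inequality $U_i \le \E_{\vec{v}}[\cdot]$ is without loss, since at the optimum the objective will be non-decreasing in each $U_i$ (via $U'_i \le U_i$) and the constraint will be tight.

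For the $Q_j$ gadget, I would verify the identity
\[ Q_j(\vec{U}) \;=\; \max_{M \ge 0,\; U'_i \le \min(U_i,M)}\ \Bigl(\sum_{i=1}^n U'_i\Bigr) - (n-j)M. \]
For fixed $M$, the inner maximizer sets $U'_i = \min(U_i, M)$, so the objective becomes $\phi(M) = \sum_i \min(U_i, M) - (n-j)M$, a concave piecewise-linear function of $M$ with breakpoints at the sorted coordinates $U_{(1)} \le \cdots \le U_{(n)}$. A direct calculation shows that on the slab $M \in [U_{(j)}, U_{(j+1)}]$ the slope of $\phi$ vanishes and $\phi(M) = \sum_{i=1}^{j} U_{(i)} = Q_j(\vec{U})$, while on slabs to the left and right the slope is strictly positive or strictly negative respectively; hence the global maximum is exactly $Q_j(\vec{U})$.

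Combining the two ingredients, the LP jointly maximizes over feasible utility vectors $\vec{U}$ and auxiliary variables $(M, \vec{U}')$ the quantity $\sum_i U'_i - (n-j)M$, which after optimizing out $(M, \vec{U}')$ equals $Q_j(\vec{U})$. Therefore $\OPT^*_j = \max\{ Q_j(\vec{U}(\Omega)) : \Omega \text{ a feasible full-revelation policy}\}$, as claimed. I do not foresee a serious obstacle: the genuine structural content (that the scenario-level feasible sets are base polytopes whose Minkowski sums describe exactly the achievable expected utility vectors) is already encapsulated in \cref{lem:mat-poly}, and the remaining work is the standard piecewise-linear case analysis of the $Q_j$ gadget.
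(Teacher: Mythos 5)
Your proposal is correct, and it supplies the argument that the paper itself defers to \citet{goel_simultaneous_2006}. Both ingredients are sound: the feasibility side uses \cref{lem:mat-poly} to justify that the scenario-wise constraint $\{v_{a_i}x_{i\vec{v}}\}\in\mathcal{F}(\vec{v})$, together with the expectation, exactly captures achievable utility vectors (with the relaxation $U_i\le\E_{\vec v}[\cdot]$ being without loss by monotonicity of the objective in each $U_i$), and your piecewise-linear analysis of $\phi(M)=\sum_i\min(U_i,M)-(n-j)M$ correctly shows the maximum over $M$ is $Q_j(\vec U)$, with the optimum attained on the slab $[U_{(j)},U_{(j+1)}]$. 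This is precisely the Goel--Meyerson mechanism, so the proposal matches the intended argument.
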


\paragraph{Dual Oracle and Multiplicative Weights.}  We now follow the framework in~\citet{bhalgat_optimal_2013} and use the Multiplicative Weights method to decide feasibility of the first constraint subject to all the others. 

For a fixed guess objective value $\OPT_j$ (which we can find the optimal value of via binary search), we rewrite the above LP as a feasibility problem for the objective being at least $\OPT_j$. It suffices to solve the corresponding oracle problem with nonnegative dual multipliers $\{\lambda_i\}$: 
\begin{gather*}
    \text{Maximize }-\sum_i \lambda_i U_i+\E_{\Vec{v}}\bracket{\sum_i \lambda_i v_{a_i} x_{i\Vec{v}}}\\
     \{v_{a_i} x_{i\Vec{v}}\}\in \mathcal{F}(\Vec{v}), \text{ for all } \Vec{v}\\
     U_i'\leq \min\{U_i, M\}, \text{ for all } i\\
     \left(\sum_{i=1}^n U_i' \right)-(n-j)M\geq \OPT_j.
\end{gather*}
This optimization problem decouples into two separate optimization programs. Minimizing $\sum_i \lambda_iU_i$ subject to all but the first constraint is a linear program and can be solved in polynomial time. Finding the maximum of $\E_{\Vec{v}}\bracket{\sum_i \lambda_i v_{a_i} x_{i\Vec{v}}}$ splits into finding the maximum of $\sum_i \lambda_i v_{a_i} x_{i\Vec{v}}$ subject to $\{v_{a_i} x_{i\Vec{v}}\}\in \mathcal{F}(\Vec{v})$ for each $\Vec{v}$.  The work of~\citet{bhalgat_optimal_2013} shows that for parameter $\delta > 0$, we can find a solution to the original LP with value at least $\OPT_j^*$ satisfying 
$$ U_i\leq \E_{\Vec{v}}\bracket{\sum_{i=1}^m v_{a_i} x_{i\Vec{v}}} + \frac{\delta}{n}$$
in time  $\mbox{poly}(n,V, 1/\delta)$, provided the following three conditions hold:

\begin{description}
\item[Sampling:] The quantity $\E_{\Vec{v}}\bracket{\sum_{i=1}^m v_{a_i} x_{i\Vec{v}}}$ (for any given solution $\vec{x}$) can be approximated to $\pm \delta/n$ with high probability by the sample average of $\mbox{poly}(n,V,1/\delta)$ vectors $\Vec{v}$. This follows by Hoeffding's inequality since the range of $\sum_{i=1}^m v_{a_i} x_{i\Vec{v}}$ is $[0,nV]$.
\item[Dual Oracle Optimization.] For non-negative dual multipliers $\{\lambda_i\}_{i=1}^n$, given a vector $\Vec{v}$, there is a polynomial time algorithm that maximizes $\sum_i \lambda_i v_{a_i} x_{i\Vec{v}}$ subject to $\{v_{a_i} x_{i\Vec{v}}\}\in \mathcal{F}(\Vec{v})$. In our case, this is simply the greedy algorithm for optimizing a linear function over the base polytope given by the submodular function $g$ from \cref{lem:mat-poly}, corresponding to the vector $\Vec{v}$. It is easy to check that the greedy algorithm runs in polynomial time, since for any given $S$, $g(S)$ is efficiently computable via the proof of \cref{lem:mat-poly}. 
\item[Width.] The width of the polytope, given by $\max_i | \E_{\Vec{v}}\bracket{\sum_{i=1}^m v_{a_i} x_{i\Vec{v}}} - U_i|$ over solutions feasible to the other constraints is polynomially bounded. In our case, the width is bounded by $O(n V)$. 
\end{description}

Binary searching over $\OPT_j$ to find the largest value that ensures feasibility of the program, this implies a feasible solution to the original LP whose objective (prefix sum) is at least $\OPT^*_j - \delta$. Running this procedure separately for each $1\leq j\leq n$, we find $\{\OPT_j\}_{j=1}^n$, such that for each $j$, \[\OPT_j\geq \max\{Q_j(\{U_i(\Omega)\}): \Omega\text{ feasible policy}\}-\delta.\]

%Note that given any $S \subseteq E$, the function $g(S)$ from the proof of \cref{lem:mat-poly} that generates the polytope $\mathcal{F}(\Vec{v})$ is easily computable from the description of $f$. Next note that optimizing a non-negative linear function over the base polytope $\mathcal{F}(\vec{v}) = \mathcal{B}(g)$  of the polymatroid $\mathcal{P}(g)$ can be achieved by a greedy algorithm. This yields the following lemma.

%\begin{lemma}\label{lem:max-subproblem}
%    There is a greedy algorithm that maximizes $\sum_i \lambda_i v_{a_i} x_{i\Vec{v}}$ subject to $\{v_{a_i} x_{i\Vec{v}}\}\in \mathcal{F}(\Vec{v})$. 
%\end{lemma}

%Note that estimating the dual oracle involves an expectation over exponentially many scenarios $\vec{v}$. We estimate this to an arbitrarily good approximation by sampling. The details are standard (see for instance~\cite{bhalgat_optimal_2013}), and we present them in \cref{app:compute}.

\paragraph{Computing $\OPT_j^*$ and the Feasible Policy.} %We now show how to use multiplicative weight update to binary search for the optimal $\OPT_j^*$ to the program that maximizes the sum of the smallest $j$ utilities. This, combined with sampling for estimating the dual oracle, will complete the proof of \cref{thm:poly-alg}. 
% We now specify the parameters to the multiplicative weights update framework as presented in \cref{sec:mult}.  The width $\rho$ of the program is bounded by $V$ as it upper bounds $\E_{\Vec{v}}\bracket{\sum_i v_i x_{i\Vec{v}}}$. The dimension $r$ of the multiplier is $n$. Given $\eps>0$ (which will be specified later), set $\delta=\eps/n$ and $K=O\paren{\frac{V^2\log n}{\delta^2}}$. %Then, running the multiplicative weights update algorithm either declares infeasibility or generates $K$ sets of multipliers $\{\lambda^l\}_{l=1}^K$. If infeasibility is declared, the $\OPT_j$ guessed is larger than $\OPT_j^*$, and we go down in the next step of binary search. Otherwise, if a set of multipliers is generated, according to the guarantee of multiplicative weight update, choosing a set of multipliers $\lambda$ uniformly at random and solving for $x_{i\Vec{v}}$ gives a policy that achieves at least $\OPT_j-\delta$, and we go up in binary search. Repeating the process above $O\paren{\log 1/\delta}$ times, 
%Binary searching for the largest $\OPT_j$ for which the multiplicative weight method declares feasibility, we find a value $\OPT_j\geq \OPT_j^*-2\delta$. 
Now that we have computed the final $\OPT_j$, we still need to compute the feasible policy that guarantees these prefix sums of utilities simultaneously for all $j$. For this, we combine all the above LPs into a single one. Since each finally computed $\OPT_j$ is guaranteed to be close to $\max\{Q_j(\{U_i(\Omega)\}): \Omega\text{ feasible policy}\}$, the following LP is feasible: 
\begin{gather*}
     U_i\leq \E_{\Vec{v}}\bracket{\sum_{i=1}^m v_{a_i} x_{i\Vec{v}}}, \text{ for all } 1\leq i\leq n\\
     \{v_{a_i} x_{i\Vec{v}}\}\in \mathcal{F}(\Vec{v}), \text{ for all } \Vec{v}\\
     U_{ij}'\leq \min\{U_i, M_j\}, \text{ for all } 1\leq i\leq n,1\leq j\leq n\\
     \left(\sum_{i=1}^n U_{ij}'\right)-(n-j)M_j\geq \OPT_j, \text{ for all }1\leq j\leq n.
\end{gather*}
A similar application of the multiplicative weights update method to this program (taking the Lagrangian of the first set of constraints) now computes the final feasible policy assuming the utilities are approximated by an additive $O(\delta)$. For the policy itself, we sample a random time step and consider the dual variables $\{\lambda_i\}$ output by the procedure. Given a vector of revealed values $\Vec{v}$, we simply maximize $\sum_{i=1}^n \lambda_i v_{a_i} x_{i\Vec{v}}$ subject to $\{v_i x_{i\Vec{v}}\} \in \mathcal{F}(\Vec{v})$ to find the allocation rule. The details are similar to~\citet{bhalgat_optimal_2013}. This completes the sketch of the proof of \cref{thm:poly-alg}.

%Combined with the sampling bounds for estimating the dual oracles in \cref{app:compute}, 

\iffalse
%We now show how to compute the policy itself from this program.

\km{I don't understand the point of the paragraphs below. Didn't we already compute everything? I think it needs an overview to explain what all these programs are doing. You write the mechanics, but do not write why you are doing all this.}

Suppose agent $i_1$ has the minimum utility, corresponding to $\OPT_1$. Then, we change the above program to 
\begin{align*}
    U_i&=\E_{\Vec{v}}\bracket{\sum_i v_i x_{i\Vec{v}}}, & \forall i\\
    \{x_{i\Vec{v}}\}&\in \mathcal{F}(\Vec{v}), & \forall \Vec{v}\\
    U_{i_1}&\geq \OPT_1\\
    U_i&\geq \OPT, & \forall i\neq i_1
\end{align*}
We perform binary search with multiplicative weights update on $\OPT$ the same way as above to determine the optimal second smallest utility. Repeating this until the utility of all agents are fixed, and the program becomes
\begin{align*}
    U_i&=\E_{\Vec{v}}\bracket{\sum_i v_i x_{i\Vec{v}}}, & \forall i\\
    \{x_{i\Vec{v}}\}&\in \mathcal{F}(\Vec{v}), & \forall \Vec{v}\\
    U_i&\geq \OPT_i
\end{align*}
Perform multiplicative weights update on this program to get a policy that realizes the set of utilities (with a $\delta$ error). Since the minimum utility is at least $1/n$ by allocating uniformly at random, the algorithm is $(1+\eps)$-majorized. 

\km{*** End *** of I don't understand. }
\fi

\section{General Signaling Policies and Approximate Majorization}\label{sec:approx}
 We now build on the results in \cref{sec:full-rev} to show \cref{thm:main-informal-exist,thm:main-informal-compute}, the existence of approximate majorized policies, and associated computational result, for general policies. To appreciate the technical challenge, the mapping scheme in full revelation policies is fixed, so that we only need to focus on designing the selection policy (or allocation rule) of the receiver. This makes the overall problem have polymatroid structure if it has that structure for a fixed scenario. However, for general signaling policies, there is a dependence between the mapping rule in the signaling policy and the allocation made by the receiver. Since both the mapping rule and selection rule are not fixed anymore, the overall signaling problem may not have polymatroid structure. 

To extend our result to approximate majorization of general policies, we adopt the approach of randomized single mean projections introduced in \citet{banerjee_majorized_2024} for selecting a single agent. For these policies, it was shown that there is a fixed set of mappings, termed maximal mappings, that can be pre-computed and are optimal (in terms of majorization) within this class. These mappings allowed them to approximate any mapping policy by an analog of full revelation policies. We follow this outline; however, we need a different set of technical arguments to show that this class of policies suffice. The main novelty in our case, beyond extending \cref{lem:mat-poly} to single-mean policies, is the proof of \cref{lem:maximal-mapping-dominance}, which carefully uses submodularity to show that it suffices to consider maximal mappings.

\subsection{Single Mean Projections}
\label{sec:single-mean}
The definitions in this section mirrors that in \citet{banerjee_majorized_2024}. We briefly review the definitions for completeness. Recall that the values $v_i$ of the agents are supported on $[1,V]$. Intuitively, single mean projection partitions the value range $[1,V]$ into a sequence of buckets, and only counts utility from one bucket. 
Given small $\eps>0$, let $\eta=1+\eps$. Assume $V$ is a power of $\eta$. Divide $[1,V]$ into buckets $I_1=[1,\eta), I_2=[\eta,\eta^2),\dots, I_k=[V/\eta,V)$.   Let $K = O\left(\frac{\log V}{\eta}\right)$ denote the number of buckets. We will use these buckets $\{I_k\}_{k=1}^K$ to partition the range of posterior means, where each bucket $I_k$ is associated with a canonical mean value $m_k$. We will construct signaling policies that choose a bucket at random and focus on the case where the posterior mean lies within this bucket.

\paragraph{Approximate Welfare-maximizing Receiver.}
 As discussed in \cref{sec:intro}, a key element of our approach to generalizing to arbitrary signals is to model the receiver not as a perfect optimizer over the exact posterior means, but as an approximate one who acts on canonical values. This models a receiver who has bounded rationality. We now formalize this model.

Given a vector of posterior means $\mu = (\mu_1, \dots, \mu_n)$, the receiver behaves as follows:
\begin{enumerate}
    \item For each agent $i$, the receiver identifies the bucket $I_{k_i}$ such that $\mu_i \in I_{k_i}$.
    \item The receiver constructs a canonical value vector $\mu'$ by setting $\mu'_i = m_{k_i}$ for all $i \in E$.
    \item The receiver computes an allocation vector $x(\mu')$ by running the greedy algorithm on the polymatroid $\mathcal{P}(f)$ with the canonical value vector $\mu'$ as the objective.
\end{enumerate}

This models a receiver who is a $(1+\epsilon)$ approximate welfare maximizer in each dimension, acting on canonical values rather than exact posterior means. From now on, we will ignore the $(1+\epsilon)$ factor, and assume the utility of an agent is computed using the canonical posterior mean values.

\paragraph{Single-Mean Projections.} 
Suppose the receiver is an approximate welfare maximizer as defined above. We define a single-mean policy for bucket $I_k$ as follows:

\begin{definition} [Single-mean Policies and Fake Utilities]
\label{def:fake}
Consider any signaling policy specified by a mapping rule and a selection rule. For any bucket $I_k$, the corresponding single-mean policy restricted to that bucket accounts for the utility of any agent as follows. The \emph{fake utility} of an agent $i$, denoted by $\hU_{i, k}$, is measures as the utility of agent $i$ when the posterior mean lies in $I_k$, else zero. Formally, if $\vec{x}$ denotes the allocation, then
\[
\hU_{i,k}(\vec{x}) = 
\begin{cases}
    m_k \cdot x_i, & \text{if $\mu_i \in I_k$;}\\
    0, & \text{if $\mu_i \notin I_k$}.
\end{cases}
\]
\end{definition}

Note that the only difference in a single-mean policy and a regular policy is the utility accounting as a fake utility. This fake utility serves as an underestimation of the true utility seen by the agent in the policy.  We now define a randomized single-mean policy as follows:

\begin{definition} [Randomized Single-mean Policies]
    A randomized single-mean policy is constructed as follows: The mapping rule $\Delta$ is a collection of $K$ mapping rules $\Delta_1, \ldots, \Delta_K$ and associated selection rules, yielding a collection of signaling policies $\Omega_1, \ldots, \Omega_K$. The utility of policy $\Omega_k$ is measured using the fake utility restricted to bucket $I_k$. The overall policy chooses one of the $K$ buckets uniformly at random, and uses the corresponding signaling policy $\Omega_k$. 
\end{definition}

Note that the expected fake utility of agent $i$ in the above randomized single-mean policy is
$$ \hU_i(\Omega) = \frac{1}{K} \sum_{k=1}^K \hU_{i,k}(\Omega_k).$$

Note that given any signaling policy $\Omega$, there is a randomized single-mean policy $\Omega_{\text{rsm}}$ obtained by picking a bucket uniformly at random and using the fake utility restricted to that bucket. In this policy, we have 
$  \hat{U}_i(\Omega_{\text{rsm}}) =   \frac{U_i(\Omega)}{K}$ for all agents $i$,
where $\hat{U}_i(\Omega_{\text{rsm}})$ is the expected fake utility of agent $i$ in the randomized single mean policy. Note that we have accounted for $U_i(\Omega)$ by rounding each posterior mean to its canonical value, and ignored the $(1+\epsilon)$ factor loss in this process. 

We now consider fixing the mapping rule $\Delta$ used by the randomized single-mean policies, but do not fix the selection rule used by the receiver. In other words, for every  bucket $I_k$, we specify the mapping rule $\Delta_k$ of values to signals. Note that this bucket is chosen with probability $1/K$ in the rule $\Delta$.  The following is analogous to \cref{lem:mat-poly}. 

\begin{lemma}
\label{lem:randomized-bucket-submodularity}
Fix an active bucket $I_k$, and the mapping rule $\Delta_k$ of the corresponding single mean policy. Fix the vector of posterior means $\mu$ of the agents. Let $\hat{g}_k(S;\mu)$ denote the maximum possible sum of the fake utilities of agents in a set $S$, where the maximization is over the selection rule of the receiver.  In other words,
$$ \hat{g}_k(S;\mu) = \max \sum_{i\in S} \hat{U}_{i,k} $$
where $\hat{U}_{i,k}$ is as defined in \cref{def:fake}. Then, the function $\hat{g}_k(S;\mu)$ is monotone and submodular.
\end{lemma}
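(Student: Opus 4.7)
The plan is to reduce this to the closed-form characterization established in the proof of \cref{lem:mat-poly}. By \cref{def:fake}, only agents $i$ with $\mu_i \in I_k$ contribute nonzero fake utility, so writing $A_k = \{i \in E : \mu_i \in I_k\}$, we have
\[
\hat{g}_k(S;\mu) = m_k \cdot \max_{\vec{x} \in \mathcal{X}^*} \sum_{i \in S \cap A_k} x_i,
\]
where $\mathcal{X}^* \subseteq \mathcal{P}(f)$ is the face of allocations maximizing the receiver's welfare objective $\sum_i m_{k_i} x_i$ over the canonical value vector.

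Next I would apply the greedy characterization from the proof of \cref{lem:mat-poly}. The canonical values partition $E$ into blocks, one per bucket, and by \cref{lem:greedy1} the vertices of $\mathcal{X}^*$ are precisely the outputs of the polymatroid greedy algorithm processing blocks in decreasing order of canonical value with arbitrary tie-breaking within each block. The selection rule corresponds exactly to choosing a (possibly randomized) tie-breaking order. To maximize $\sum_{i \in S \cap A_k} x_i$, the receiver places the agents of $S \cap A_k$ first within block $A_k$; letting $Q_k = \bigcup_{j : m_j > m_k}\{i : \mu_i \in I_j\}$ denote the (fixed, $S$-independent) set of agents in higher-value blocks, the same telescoping computation as in \cref{lem:mat-poly} yields the closed form
\[
\hat{g}_k(S;\mu) = m_k \cdot \bigl[\, f\bigl(Q_k \cup (S \cap A_k)\bigr) - f(Q_k) \,\bigr].
\]

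Define $h(T) = f(Q_k \cup T) - f(Q_k)$. Since $f$ is monotone and submodular, so is $h$: marginals of monotone submodular functions are monotone submodular, and shifting by a constant preserves both properties. Then $\hat{g}_k(S;\mu) = m_k \cdot h(S \cap A_k)$ is a non-negative scalar multiple of $h$ evaluated on intersections with the fixed set $A_k$. Monotonicity in $S$ is immediate from monotonicity of $h$, and submodularity follows from submodularity of $h$ together with the set identities $(S \cup T) \cap A_k = (S \cap A_k) \cup (T \cap A_k)$ and $(S \cap T) \cap A_k = (S \cap A_k) \cap (T \cap A_k)$.

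The main conceptual step, as in \cref{lem:mat-poly}, is recognizing that the receiver's freedom in selecting from $\mathcal{X}^*$ corresponds exactly to tie-breaking within blocks of equal canonical value, which converts the maximum over $\mathcal{X}^*$ into an explicit expression in terms of $f$. Once the closed form is in hand, the monotone submodular structure follows by routine manipulations, so no genuinely new difficulty arises beyond restricting attention to the single active block $A_k$.
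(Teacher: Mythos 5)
Your proof is correct and follows essentially the same route as the paper: both reduce to the greedy characterization of the optimal face, observe that the tie-breaking within the active block $E'_k = A_k$ (prioritizing $S$) is what matters, telescope to obtain the closed form $m_k\bigl[f(P_{>k}\cup (S\cap E'_k)) - f(P_{>k})\bigr]$, and then read off monotonicity and submodularity. Your extra sentence unpacking why $S \mapsto h(S\cap A_k)$ inherits monotonicity and submodularity via the distributivity of intersection over union and intersection is a welcome bit of rigor where the paper simply asserts "clearly submodular," but it is not a genuinely different argument.
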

\begin{proof}
Given $\mu$, the receiver runs a greedy allocation that sorts agents in decreasing order of their canonical posterior means. By the argument in the proof of \cref{lem:mat-poly}, the maximum value is achieved by running the greedy algorithm on the canonical posterior means, with a tie-breaking rule that prioritizes agents in the set $S$. Let $\pi_S$ be the permutation corresponding to this rule. %Let $\hat{g}_k(S;\mu) = \sum_{i \in S} \hat{U}_{i,k}$ be the corresponding utility sum when the fixed active bucket is $I_k$, so that $\hat{g}(S;\mu) = \frac{1}{K} \sum_{k=1}^K \hat{g}_k(S;\mu)$. 
%If we can prove that each individual function $\hat{g}_k(S;\mu)$ is submodular, it follows directly that their average, $\hat{g}(S;\mu)$, is also submodular. We now prove that for any fixed bucket $k$, the function $\hat{g}_k(S;\mu)$ is submodular.
Let the blocks of agents with identical canonical posterior means be denoted by $\{E'_j\}_{j=1}^K$, where agents in $E'_j$ have canonical mean $m_j$, meaning $\mu_i \in I_j$. The bucket of interest $I_k$ corresponds to one of these blocks, say $E'_k$, with canonical value $m_k$. Then, %The utility for this fixed bucket is non-zero only for agents in $S \cap E'_k$.
\[ \hat{g}_k(S;\mu) = \max_{\vec{x}} \sum_{i \in S \cap E'_k} m_k \cdot x_i = m_k \sum_{i \in S \cap E'_k} x_i(\pi_S). \]
Let $h_k(S;\mu) = \sum_{i \in S \cap E'_k} x_i(\pi_S)$.  Let $P_{>k} = \cup_{j > k} E'_j$ be the fixed set of all agents in higher-value blocks. The greedy algorithm (given by ordering $\pi_S$) processes all agents in the set $S_k = S \cap E'_k$ contiguously and before other agents in $E'_k$. The sum of the allocations for these agents forms a telescoping series:
\[ h_k(S;\mu) = \sum_{i \in S_k} x_i(\pi_S) = f(P_{>k} \cup S_k) - f(P_{>k}). \]
where $f$ is the rank function of the underlying polymatroid constraint. This implies
$$ \hat{g}_k(S;\mu) =  m_k \cdot \left( f(P_{>k} \cup \left(S \cap E'_k\right)) - f(P_{>k}) \right). $$
This function is clearly submodular, completing the proof.
\end{proof}

We now proceed as in the proof of \cref{lem:mat-poly} and how the following lemma:
\begin{lemma}
    Fix an active bucket $I_k$, and the mapping rule $\Delta_k$ of the corresponding single mean policy. Fix the vector of posterior means $\mu$ of the agents. Let $\mathcal{U}$ denote the set of achievable fake utility vectors $\{\hU_{i,k}\}_{i=1}^n$. Then 
    $$ \mathcal{B}(\hat{g}_k(\cdot;\mu))  \subseteq \mathcal{U} \subseteq \mathcal{P}(\hat{g}_k(\cdot;\mu)),$$
    where $\hat{g}_k(\cdot;\mu)$ is the set function defined in \cref{lem:randomized-bucket-submodularity}.
\end{lemma}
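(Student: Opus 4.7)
The plan is to follow the same two-sided argument used in the proof of \cref{lem:mat-poly}, now leveraging the closed-form expression
\[
\hat{g}_k(S;\mu) = m_k \cdot \left( f(P_{>k} \cup (S \cap E'_k)) - f(P_{>k}) \right)
\]
derived in the proof of \cref{lem:randomized-bucket-submodularity}. The outer inclusion $\mathcal{U} \subseteq \mathcal{P}(\hat{g}_k(\cdot;\mu))$ is essentially by definition: for any achievable fake utility vector $\hat{u}\in\mathcal{U}$ and any $S\subseteq E$, the defining maximization gives $\sum_{i\in S}\hat{u}_i \le \hat{g}_k(S;\mu)$, and together with $\hat{u}\ge \vec{0}$ this places $\hat{u}$ in the polymatroid.

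For the nontrivial inclusion $\mathcal{B}(\hat{g}_k(\cdot;\mu)) \subseteq \mathcal{U}$, I would first observe that $\mathcal{U}$ is convex: fixing $\Delta_k$ and $\mu$ fixes the canonical posterior means and hence the face $\mathcal{X}^*$ of $\mathcal{P}(f)$ consisting of welfare-maximizing allocations, and $\mathcal{U}$ is the image of the convex set $\mathcal{X}^*$ under the linear map sending $\vec{x}$ to the vector whose $i$-th coordinate is $m_k x_i$ if $i \in E'_k$ and $0$ otherwise. Hence it suffices to show that each vertex of $\mathcal{B}(\hat{g}_k(\cdot;\mu))$ lies in $\mathcal{U}$. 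By \cref{lem:greedy2}, every such vertex arises from some ordering $\pi$ on $E$ via $\hat{u}_{\pi(i)} = \hat{g}_k([i];\mu) - \hat{g}_k([i-1];\mu)$, where $[i] = \{\pi(1),\ldots,\pi(i)\}$. Substituting the closed form above, $\hat{u}_{\pi(i)} = 0$ whenever $\pi(i)\notin E'_k$, and
\[
\hat{u}_{\pi(i)} = m_k \cdot \left( f\bigl(P_{>k} \cup ([i-1]\cap E'_k) \cup \{\pi(i)\}\bigr) - f\bigl(P_{>k} \cup ([i-1]\cap E'_k)\bigr) \right)
\]
whenever $\pi(i)\in E'_k$.

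To realize this vertex as an element of $\mathcal{U}$, I would have the receiver adopt the following tie-break: process agents of $P_{>k}$ first in any fixed order, then agents of $E'_k$ in the order induced by restricting $\pi$ to $E'_k$, and finally agents of $P_{<k}$ in any order. By \cref{lem:greedy1} this is a valid greedy ordering for the canonical welfare objective, so the resulting $\vec{x}$ lies in $\mathcal{X}^*$. The same telescoping identity used in the proof of \cref{lem:randomized-bucket-submodularity} gives $x_{\pi(i)} = f(P_{>k}\cup([i-1]\cap E'_k)\cup\{\pi(i)\}) - f(P_{>k}\cup([i-1]\cap E'_k))$ for each $\pi(i)\in E'_k$, so the fake utility $m_k x_{\pi(i)}$ matches the vertex value, while agents outside $E'_k$ have fake utility $0$ by \cref{def:fake}, also matching. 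Thus every vertex of $\mathcal{B}(\hat{g}_k(\cdot;\mu))$ lies in $\mathcal{U}$, and convexity then yields the whole base polytope.

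The main subtlety is the alignment between the ordering $\pi$ on $E$, which is arbitrary, and the receiver's tie-breaking rule, which is only free within each canonical-value block. This works because $\hat{g}_k$ is effectively supported on $E'_k$: agents outside $E'_k$ contribute zero marginals and so their positions in $\pi$ can be absorbed into any valid ordering of $P_{>k}$ or $P_{<k}$, while the induced order of $\pi$ on $E'_k$ exactly matches the intra-block degree of freedom the receiver enjoys. This is the structural reason the two-sided inclusion goes through even though, unlike in \cref{lem:mat-poly}, $\mathcal{U}$ need not itself be a base polytope once we restrict attention to fake utilities.
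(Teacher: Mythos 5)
Your proposal is correct and follows essentially the same argument as the paper: show $\mathcal{U}$ is a convex image of the optimal face, and realize each vertex of $\mathcal{B}(\hat{g}_k(\cdot;\mu))$ via a receiver tie-break consistent with the greedy ordering of \cref{lem:greedy1}. One small imprecision: processing $P_{>k}$ ``in any fixed order'' should be ``in any order consistent with decreasing canonical means across its blocks,'' though this does not affect the telescoping computation for agents in $E'_k$ since it depends on $P_{>k}$ only as a set.
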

\begin{proof}
By the definition of $\hat{g}_k(\cdot;\mu)$, for any set $S \subseteq [n]$, we have $\sum_{i \in S} \hU_{i,k} \le \hat{g}_k(S;\mu)$. The latter system defined $ \mathcal{P}(\hat{g}_k(\cdot;\mu))$, which shows the second containment. 

We will now show the first containment. First, note that $\mathcal{U}$ is the projection of the true utility vector (computed using canonical means) onto the coordinates in $E'_k$ (using the notation from the proof of \cref{lem:randomized-bucket-submodularity}). The set of true utility vectors  is derived from the face of the polymatroid $\mathcal{P}(f)$ that maximizes $\sum_{i=1}^n \hat{\mu}_i x_i$, where $\hat{\mu_i} = m_j$ if $i \in E'_j$ --- each optimal point $\vec{x}$ yields the utility vector $\{\hat{\mu_i} \cdot x_i\}_{i=1}^n$. This set is therefore convex, and its projection $\mathcal{U}$ is convex as well. 

Consider the base polytope $\mathcal{B}(\hat{g}_k(\cdot;\mu))$. By \cref{lem:greedy2}, any vertex of the base polytope corresponds to a permutation $\pi$ of the agents (say $1,2,\ldots,n$) and setting $\hat{U}_{i,k}= \hat{g}_k([i];\mu) - \hat{g}_k([i-1];\mu)$.  We will show this vector is realizable by some tie-breaking rule of the receiver's greedy algorithm. First, if $i \notin E'_k$, then $\hat{U}_{i,k} = 0$ by the formula derived in \cref{lem:randomized-bucket-submodularity}. Next, suppose $i \in E'_k$, and let $S_{i,k} = [i-1] \cap E'_k$, then by the same formula, if we place $i$ after $S_{i,k}$ in the receiver's tie-break ordering of the agents in $E'_k$, then the receiver allocates $x_i = f(P_{>k} \cup S_k \cup \{i\}) - f(P_{>k} \cup S_k)$  in its greedy algorithm. This means the fake utility of $i$ in the receiver's allocation  is exactly $\hat{U}_{i,k}$, showing $\{\hat{U}_{i,k}\}_{i=1}^n$ is realizable.  Thus, the vertices of $\mathcal{B}(\hat{g}_k(\cdot;\mu))$ are contained in $\mathcal{U}$, and by the convexity of both sets, we have $\mathcal{B}(\hat{g}_k(\cdot;\mu))  \subseteq \mathcal{U}$, completing the proof. 
\end{proof}

Taking the Minkowski sum over the random choice of bucket $I_k$ and  over the realized posterior means, and using \cref{cor:mon-sub}, we obtain the following.

\begin{corollary}
\label{cor:base2}
    Given a randomized single mean mapping rule $\Delta = \{\Delta_k\}_{k=1}^K$, let the expected fake utility be $\hU_i = \frac{1}{K} \sum_{k=1}^k \E_{\mu \sim \Delta_k}[\hU_{i,k}] $. Let $\mathcal{U}$ denote set of vectors $\{\hU_i\}_{i=1}^n$ obtained by varying the selection rule of the receiver. Then, $\mathcal{U}$ is contained in the following polymatroid $\mathcal{\hat{R}}(\Delta)$, and contains its base polytope, where $\mathcal{\hat{R}}(\Delta) $ is defined as: 
\begin{equation}
\label{eq:poly2}
\mathcal{\hat{R}}(\Delta) = \left\{ \vec{y} \ge 0\ \big| \ \sum_{i \in S} y_i \le \frac{1}{K} \sum_{k=1}^K \E_{\mu \sim \Delta_k} [\hat{g}_k(S;\mu)] \ \forall S \subseteq [n] \right\}. 
\end{equation}
Here, the expectation in the RHS is over the vector $\vec{\mu}$ of posterior means produced by the mapping $\Delta_k$, and the functions $\hat{g}_k$ are as defined in \cref{lem:randomized-bucket-submodularity}. %Different selection rules used by the receiver yield different points in the  above polymatroid.
\end{corollary}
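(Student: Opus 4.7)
The plan is to lift the per-scenario statement of the previous lemma to the overall randomized single-mean policy by taking expectations, viewed as Minkowski sums, over the random bucket and the random posterior-mean realizations, and then invoking \cref{cor:mon-sub}. The previous lemma already gives, for every fixed bucket $k$ and posterior mean vector $\mu$, a set $\mathcal{U}_{k,\mu}$ of achievable fake-utility vectors that satisfies
\[
\mathcal{B}(\hat{g}_k(\cdot;\mu)) \ \subseteq \ \mathcal{U}_{k,\mu} \ \subseteq \ \mathcal{P}(\hat{g}_k(\cdot;\mu)).
\]
So the main conceptual move is to express the overall set $\mathcal{U}$ in terms of these $\mathcal{U}_{k,\mu}$, and then propagate both containments through the aggregation operation.

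First, I would note that the fake-utility vector produced by the randomized single-mean policy is, by definition, $\hU = \tfrac{1}{K}\sum_{k=1}^{K}\E_{\mu\sim\Delta_k}[\hU^{(k,\mu)}]$, where $\hU^{(k,\mu)} \in \mathcal{U}_{k,\mu}$ depends only on the receiver's selection rule in the scenario $(k,\mu)$. Since each agent's signal and hence each posterior mean vector $\mu$ is drawn independently of the selection rule chosen in each scenario, varying the selection rule across scenarios independently shows that $\mathcal{U}$ is precisely the scaled Minkowski ``sum'' (a convex combination, and then an expectation)
\[
\mathcal{U} \ = \ \tfrac{1}{K}\sum_{k=1}^{K}\E_{\mu\sim\Delta_k}\bigl[\mathcal{U}_{k,\mu}\bigr],
\]
where the expectation over $\mu$ is interpreted as a limit of Minkowski sums of discretized distributions (or, equivalently via support functions, as the set whose support function is the expected support function).

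Next, I would apply \cref{cor:mon-sub}, which says that Minkowski sums and nonnegative scalar multiples of independence/base polytopes of submodular functions are again independence/base polytopes of the corresponding sum. Iterating this (and passing to the limit for the continuous expectation, which is legitimate since all polymatroids live in a fixed bounded box $[0,V]^n$ and submodularity is preserved by finite sums and by pointwise limits), the Minkowski-sum inclusion
\[
\tfrac{1}{K}\sum_{k}\E_{\mu}[\mathcal{B}(\hat{g}_k(\cdot;\mu))] \ \subseteq \ \mathcal{U} \ \subseteq \ \tfrac{1}{K}\sum_{k}\E_{\mu}[\mathcal{P}(\hat{g}_k(\cdot;\mu))]
\]
becomes
\[
\mathcal{B}(\rho) \ \subseteq \ \mathcal{U} \ \subseteq \ \mathcal{P}(\rho),
\]
where $\rho(S) = \tfrac{1}{K}\sum_{k=1}^K \E_{\mu\sim\Delta_k}[\hat{g}_k(S;\mu)]$. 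Submodularity of $\rho$ follows because each $\hat{g}_k(\cdot;\mu)$ is submodular by the previous lemma, and submodularity is preserved by nonnegative linear combinations and by expectations. Monotonicity and $\rho(\emptyset)=0$ are immediate from the corresponding properties of the $\hat{g}_k$. This is exactly the polymatroid $\hat{\mathcal{R}}(\Delta)$ in the statement.

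The only step that needs any care is the continuous-expectation Minkowski sum, since \cref{cor:mon-sub} is phrased for finitely many summands. I would handle this either by discretizing each $\Delta_k$ to arbitrary precision (the fake-utility set is bounded, so convergence in Hausdorff distance is routine) and taking limits, or by working directly with support functions of the involved convex sets, whose expectations clearly match the support function of $\mathcal{P}(\rho)$ and $\mathcal{B}(\rho)$. I do not expect any genuine obstacle here; the heart of the argument is already in \cref{lem:randomized-bucket-submodularity} and the preceding lemma, and this corollary is essentially a bookkeeping step that aggregates them into a single polymatroidal description.
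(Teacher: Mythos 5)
Your proposal is correct and follows essentially the same route as the paper's: the paper's proof (given in the sentence immediately preceding the corollary) is precisely to take the Minkowski sum over the random bucket and the realized posterior means and invoke \cref{cor:mon-sub}. You also add a careful remark about interpreting the continuous expectation as a limit of Minkowski sums (or via support functions), a technical point the paper glosses over, but this does not change the argument in any substantive way.
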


%By the above lemma, a $1$-majorized policy for randomized single-mean projections exists if we use the ``fake'' utility imputed by the function $\hat{g}(S)$. Since the actual utility achieved by the agents is within a factor of $K$ of the ``fake'' utility, this shows a $K$-majorized policy exists if we can compute the $1$-majorized utility vector and corresponding policy for the polymatroid given by $\hat{g}(S)$.

\subsection{Maximal Single-mean Mappings}
\label{sec:max-map}
The issue with extending the above lemma to a proof analogous to \cref{thm:maj-policy} is that the mapping rule $\Delta$  is now a variable (in addition to the selection rule that depends on $\Delta$). This wasn't an issue in the proof of \cref{thm:maj-policy}, where the mapping rule was fixed and the set of utilities obtained by varying the selection rule defines the base of a polymatroid. In contrast, though $\mathcal{\hat{R}}(\Delta)$ as  defined above is a polymatroid, the union of such polymatroids over $\Delta$ need not have nice structure.

We now proceed as in~\citet{banerjee_majorized_2024} and show that the optimal signaling policy for single-mean policies is fixed and independent of the allocation. This will allow us to argue polymatroidal structure, and show that the space of randomized single mean policies has a $1$-majorized solution. Since the true utility is within a factor of $K$ of the fake utilities used by such policies, this directly implies a $K$-majorized policy for general signaling policies, and we will show that in \cref{thm:main-formal}.

Towards this end, we define a {\em maximal mapping} analogous to~\citet{banerjee_majorized_2024}.

\begin{definition}[Maximal Mapping] 
   For an interval $I_k$, a \emph{maximal mapping} is a mapping rule $\omega$ from agent values to signals $\{\sigma\}$ such that  $\Pr_{\sigma \sim \omega} [\mu_i(\sigma) \in I_k]$ is maximized for each agent $i$.
\end{definition}

Note that for each agent $i$, the maximal mapping to a given interval $I_k$ is the solution to a linear program~\citep{banerjee_majorized_2024}. This mapping is fixed and decoupled from the allocation rule of the receiver. It can also be computed separately for each agent. The set of maximal mappings, one for each $I_k$, yields the mapping rule of a randomized single-mean policy, by choosing one of the buckets uniformly at random. We call this mapping rule $\Delta_{\max}$.

We now present the key structural lemma to show that randomized single mean policies can switch to using maximal mappings without reducing the maximum expected utility of any set of agents. In the lemma below, the notation $\hat{g}_k(S;\mu)$ is as defined in the proof of \cref{lem:randomized-bucket-submodularity}. Further, by the notation $\mu \sim \Delta_k$, we mean a posterior mean vector that results from the execution of the mapping rule $\Delta_k$. 

\begin{lemma}[Structure Lemma]
\label{lem:maximal-mapping-dominance}
Consider a fixed active bucket $I_k$. Let $\Delta_k = (\sigma_1, \dots, \sigma_n)$ be an arbitrary mapping rule where each agent's mapping $\sigma_i$ is chosen independently. Consider the mapping rule $\Delta^{\max}_{k} = (\sigma'_1, \dots, \sigma'_n)$, where each $\sigma'_i$ is a maximal mapping for agent $i$ with respect to the bucket $I_k$. Then, for any set $S$ of agents, we have:
\[ \mathbb{E}_{\mu \sim \Delta^{\max}_{k}}[\hat{g}_k(S; \mu)] \ge \mathbb{E}_{\mu \sim \Delta_k}[\hat{g}_k(S; \mu)]. \]
\end{lemma}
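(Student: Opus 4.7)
The plan is to use a hybrid argument. Define $\Delta^{(0)} = \Delta_k$ and obtain $\Delta^{(j)}$ from $\Delta^{(j-1)}$ by replacing $\sigma_j$ with $\sigma'_j$; since the agents' signals are independent, it suffices to show $\mathbb{E}_{\mu \sim \Delta^{(j)}}[\hat{g}_k(S;\mu)] \ge \mathbb{E}_{\mu \sim \Delta^{(j-1)}}[\hat{g}_k(S;\mu)]$ for each $j$. Conditioning on the signals of all agents other than $j$, this further reduces to a per-agent comparison, where the only relevant summary of each other agent is its state in $\{L,M,H\}$ describing whether its posterior mean falls below, inside, or above $I_k$.

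For a single hybrid step I would invoke the closed form $\hat{g}_k(S;\mu) = m_k\bigl(f(R\cup B)-f(R)\bigr)$ derived in the proof of \cref{lem:randomized-bucket-submodularity}, where $R$ is the set of agents with posterior mean above $I_k$ and $B = S \cap E'_k$ is the set of $S$-agents inside $I_k$. With the states of agents other than $j$ frozen, let $\phi_M,\phi_L,\phi_H$ denote the value of this marginal when $z_j=M,L,H$ respectively. Monotonicity of $f$ gives $\phi_M \ge \phi_L$, and submodularity gives $\phi_L \ge \phi_H$ when $j\in S$ and $\phi_M = \phi_L \ge \phi_H$ when $j\notin S$. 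Consequently, the only thing that matters about $\sigma_j$ versus $\sigma'_j$ is the probability triple $(p,q,a) = \bigl(\Pr[z_j=M], \Pr[z_j=H], \Pr[z_j=L]\bigr)$.

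The main obstacle will be the case $\mathbb{E}[v_j] > \max I_k$ (``case 3''), because in this case any maximal mapping places its non-$M$ mass on the $H$ side, which is the ``bad'' side of the marginal. Here I would establish two structural properties of any maximal $\sigma'_j$ using only Bayes plausibility: $(i)$ $a'_j = 0$, so that all non-$M$ mass of $\sigma'_j$ lies in $H$; and $(ii)$ the strengthened maximality $p'_j \ge p_j + a_j$. Both follow from a pooling argument. Merging the $L$ and $M$ signals of $\sigma_j$ produces a single signal of probability $p_j+a_j$ with posterior mean $\mu_{\mathrm{low}} \le \max I_k$. If $\mu_{\mathrm{low}} \in I_k$, then $(ii)$ is immediate; otherwise, using the fact that any signal can be split into two signals of the same posterior mean, we transfer a fraction $\beta$ of the $H$ mass into the pool until the pooled mean equals $\min I_k$. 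Bayes plausibility combined with $\mathbb{E}[v_j] > \min I_k$ forces $\beta \in (0,1)$, so the resulting two-signal mapping achieves $\Pr[\mu\in I_k] = p_j + a_j + \beta q_j$, yielding $(ii)$. Property $(i)$ follows by the same pooling trick applied inside a purportedly maximal $\sigma'_j$: any $a'_j > 0$ would allow one to convert some $L$ mass into $M$ mass and strictly increase $p'_j$, contradicting maximality. The symmetric statements hold in case 2 ($\mathbb{E}[v_j] < \min I_k$, giving $q'_j=0$ and $p'_j \ge p_j + q_j$), and case 1 ($\mathbb{E}[v_j] \in I_k$) is immediate since $p'_j = 1$.

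Given $(i)$ and $(ii)$, the per-agent inequality reduces to routine algebra. For instance, in case 3 with $j\notin S$ the difference simplifies to $(p'_j + q_j - 1)(\phi_L - \phi_H)$, which is non-negative by $(ii)$ and $\phi_L \ge \phi_H$; in case 3 with $j\in S$ it equals $(p'_j - p_j)(\phi_M - \phi_H) - a_j(\phi_L - \phi_H)$, non-negative because $p'_j - p_j \ge a_j$ by $(ii)$ and $\phi_M - \phi_H \ge \phi_L - \phi_H \ge 0$. Cases 1 and 2 require only $p'_j \ge p_j$; for example, in case 2 with $j\in S$ the difference is $(p'_j - p_j)(\phi_M - \phi_L) + q_j(\phi_L - \phi_H)$, a sum of two non-negative terms. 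Telescoping the per-agent inequalities along the hybrid then completes the proof.
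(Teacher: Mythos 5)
Your proposal is correct and follows essentially the same approach as the paper: a hybrid swap over agents using independence, conditioning on the other agents, the same closed-form for $\hat{g}_k$, the inequalities $\phi_M \ge \phi_L \ge \phi_H$ (the paper's $C_{in}\ge C_{below}\ge C_{above}$) from monotonicity and submodularity, and the same pooling construction borrowed from \citet{banerjee_majorized_2024}. The only difference is bookkeeping: you derive the strengthened maximality $p'_j \ge p_j + a_j$ (resp.\ $p_j + q_j$) and $a'_j=0$ (resp.\ $q'_j=0$) directly and then do explicit algebra, while the paper equivalently pools $\sigma_1$ into a one-sided mapping and compares that with the maximal one.
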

\begin{proof}
The proof proceeds by showing that for any single agent $i$, changing its signaling policy from an arbitrary one, $\sigma_i$, to its maximal mapping, $\sigma'_i$, while keeping all other agents' schemes fixed, can only increase the total expected utility. Since the agents' signaling policies are independent, iterating this argument over all agents will complete the proof.

We fix one agent, say agent 1, and consider changing its mapping from $\sigma_1$ to $\sigma'_1$. Let $\Delta^{-1} = (\sigma_2, \dots, \sigma_n)$ be the policies for all other agents. Let $\mu_{-1} = (\mu_2, \dots, \mu_n)$ be a realization of posterior means for these agents, drawn according to $\Delta^{-1}$.

For a given realization $\mu_{-1}$, we define a function $H_{\mu_{-1}}(\mu_1)$ as 
\[ H_{\mu_{-1}}(\mu_1) = \hat{g}_k(S; (\mu_1, \mu_{-1})). \]
The total expected utility can be written as an expectation over the choices of all agents:
\[ \mathbb{E}_{\mu \sim \Delta_k}[\hat{g}_k(S; \mu)] = \mathbb{E}_{\mu_{-1} \sim \Delta^{-1}} \left[ \mathbb{E}_{\mu_1 \sim \sigma_1}[H_{\mu_{-1}}(\mu_1)] \right]. \]
It therefore suffices to show that for any fixed outcome $\mu_{-1}$ of the other agents, the inner expectation increases when we switch from $\sigma_1$ to $\sigma'_1$:
\[ \mathbb{E}_{\mu_1 \sim \sigma'_1}[H_{\mu_{-1}}(\mu_1)] \ge \mathbb{E}_{\mu_1 \sim \sigma_1}[H_{\mu_{-1}}(\mu_1)]. \]

The value of $H_{\mu_{-1}}(\mu_1)$ depends on which bucket, $I_j$, the value $\mu_1$ falls into. Let $E_j(\mu_{-1}) = \{i \ne 1 \mid \mu_i \in I_j\}$ be the set of other agents in bucket $j$.
From our previous analysis, the closed-form for the utility is:
\[ \hat{g}_k(S; \mu) = m_k \left[ f(P_{>k}(\mu) \cup S_k) - f(P_{>k}(\mu)) \right]. \]
where $S_k$ is the set of agents falling within bucket $I_k$ and $P_{>k}(\mu)$ is the set of agents falling in a bucket with higher mean than bucket $k$. (These are random variables since we did not fix the bucket of agent $1$.) 

We will now analyze how the bucket assignment of agent $1$ affects this value, for a fixed $\mu_{-1}$. Let $P_{>k}^{-1} = \bigcup_{j>k} E_j(\mu_{-1})$ and $S_k^{-1} = S \cap E_k(\mu_{-1})$. These are fixed values since we fixed $\mu_{-1}$. Then, we have:

\begin{itemize}
    \item If $\mu_1 \in I_k$, then
    $P_{>k}(\mu) = P_{>k}^{-1}$ and $S_k = S_k^{-1} \cup (S \cap \{1\})$.
    The utility expression becomes:
    \[ H_{\mu_{-1}}(\mu_1 \in I_k) = m_k \left[ f(P_{>k}^{-1} \cup S_k^{-1} \cup (S \cap \{1\})) - f(P_{>k}^{-1}) \right]. \]

    \item If $\mu_1 \in I_j$ where $j > k$, then
    $P_{>k}(\mu) = P_{>k}^{-1} \cup \{1\}$, and $S_k = S_k^{-1}$.
    Then:
    \[ H_{\mu_{-1}}(\mu_1 \in I_j, j>k) = m_k \left[ f(P_{>k}^{-1} \cup \{1\} \cup S_k^{-1}) - f(P_{>k}^{-1} \cup \{1\}) \right]. \]
    
    \item If $\mu_1 \in I_j$ where $j < k$, then
    $P_{>k}(\mu) = P_{>k}^{-1}$ and $S_k = S_k^{-1}$. 
    The utility expression becomes:
    \[ H_{\mu_{-1}}(\mu_1 \in I_j, j<k) = m_k \left[ f(P_{>k}^{-1} \cup S_k^{-1}) - f(P_{>k}^{-1}) \right]. \]
\end{itemize}

Let $C_{in} = H_{\mu_{-1}}(\mu_1 \in I_k)$, $C_{above} = H_{\mu_{-1}}(\mu_1 \in I_j, j>k)$, and $C_{below} = H_{\mu_{-1}}(\mu_1 \in I_j, j<k)$. By monotonicity of $f$, all these terms are non-negative and we have $C_{in} \ge C_{below}$. By the submodularity of $f$, we know that for any sets $X, Y$ and any element $t \notin X$, we have $f(X \cup Y) - f(X) \ge f(X \cup \{t\} \cup Y) - f(X \cup \{t\})$. Applying this with $X=P_{>k}^{-1}$, $Y=S_k^{-1} \cup (S \cap \{1\})$, and $t=1$, we have $C_{in} \ge C_{above}$.  Therefore, we have $C_{in} \ge \max\{C_{above}, C_{below}\}$.

Let $p_j = \Pr_{\mu_1 \sim \sigma_1}[\mu_1 \in I_j]$ be the probabilities under the original scheme, and $p'_j$ be the probabilities under the maximal mapping $\sigma'_1$. The expected utilities for given $\mu_{-1}$ are:
\begin{align*}
    \mathbb{E}_{\mu_1 \sim \sigma_1}[H_{\mu_{-1}}(\mu_1)] &= p_k C_{in} + \sum_{j>k} p_j C_{above} + \sum_{j<k} p_j C_{below}, \\
   \mathbb{E}_{\mu_1 \sim \sigma'_1}[H_{\mu_{-1}}(\mu_1)] &= p'_k C_{in} + \sum_{j>k} p'_j C_{above} + \sum_{j<k} p'_j C_{below}.
\end{align*}

The construction in the proof of Lemma 4.3 of ~\citet{banerjee_majorized_2024} shows that given any mapping $\sigma$, one can construct a new signal that moves probability mass from signals whose posteriors lie in the sets $\{I_j, j > k\}$ and $\{I_j, j < k\}$ (assuming both masses are non-zero) to signals whose posterior lie in the set $I_k$, while preserving Bayes plausibility, meaning that the posteriors correspond to a valid signaling policy and the expectation of the posterior means is equal to the prior mean. We present the pooling process from~\citet{banerjee_majorized_2024} below for the sake of completeness:

\begin{itemize}
    \item Consider the mapping $\sigma$ and let $S_1$ be the set of signals with posterior means $\{\mu_1 \in I_j, j < k\}$, $S_2$ be the signals with $\{\mu_1 \in I_k\}$ and $S_3$ be the set of signals $\{\mu_1 \in I_j, j > k\}$. Construct signals $\phi^1, \phi^2, \phi^3$ and send them whenever a signal in $S_1, S_2, S_3$ is sent, respectively. This step preserves $p_k, \sum_{j > k} p_j, \sum_{j < k} p_j$.
    \item Let $\eta_1, \eta_2, \eta_3$ denote the posterior means of the signals $\phi_1, \phi_2, \phi_3$ respectively. Note that $\eta_1 \in I_j$ for some $j < k$; $\eta_2 \in I_k$, and $\eta_3 \in I_j$ for some $j > k$. Let $\eta_2 = \alpha \eta_1 + (1-\alpha)\eta_3$, where $\alpha \in (0,1)$. define $q_1 := \Pr[\phi^1]$ and $q_3 := \Pr[\phi^3]$. Create a new signal $\phi^4$, and do the following things:
    \begin{itemize}
        \item If $\beta = \frac{q_1 (1-\alpha)}{q_3 \alpha} \le 1$, then whenever signal $\phi^1$ was sent,  $\phi^4$ is sent instead, and whenever $\phi^3$ was sent, $\phi^4$ is sent instead with probability $\beta$ and $\phi^3$ is sent with probability $1-\beta$. 
        \item If $\beta > 1$, then whenever signal $\phi^3$ was sent, $\phi^4$ is sent instead, and whenever $\phi^1$ was sent, $\phi^4$ is sent instead with probability $1/\beta$ and $\phi^1$ sent with probability $1-1/\beta$.
    \end{itemize}
    \item In either case, we note that the posterior mean of $\phi^4$ is precisely $\alpha \eta_1 + (1-\alpha_i) \eta_3 = \eta_2 \in I_k$. The sender can then send signal $\phi^2$ whenever $\phi^4$ is sent.
\end{itemize}

This pooling process does not increase $\sum_{j < k} p_j = \Pr[\phi^1]$ and $\sum_{j > k} p_j = \Pr[\phi^3]$, while it does not decrease $p_k = \Pr[\phi^2]$, which is the probability of the posterior mean landing in the target bucket $I_k$. It also does not decrease $\mathbb{E}_{\mu_1 \sim \sigma_1}[H_{\mu_{-1}}(\mu_1)]$, since $C_{in} \ge \max\{C_{below}, C_{above}\}$. Note that now, $\min\{\sum_{j>k} p_j, \sum_{j<k} p_j\}  = 0$. Since $p_k$ did not decrease in this process, a maximal mapping $\sigma_1'$ will already be of this form, that is, we also have $\min\{\sum_{j>k} p'_j, \sum_{j<k} p'_j\} = 0$.

Let $\theta = \E[D_1]$ denote the prior mean of agent $1$. There are two cases:

\begin{description}
\item[Case 1: $\theta \in I_k$:] In this case, the maximal mapping satisfies $p'_k = 1$, so that $\max\{\sum_{j>k} p'_j, \sum_{j<k} p'_j\} = 0$. Since $C_{in} \ge \max\{C_{below}, C_{above}\}$ and the probabilities $p$ and $p'$ respectively sum to $1$, this implies $\mathbb{E}_{\mu_1 \sim \sigma'_1}[H_{\mu_{-1}}(\mu_1)] \ge \mathbb{E}_{\mu_1 \sim \sigma_1}[H_{\mu_{-1}}(\mu_1)]$.
\item[Case 2: $\theta \notin I_k$.] Suppose $\theta \in I_j$ for $j < k$. Then, Bayes plausibility (posterior mean equals prior mean) implies $\sum_{j<k} p_j > 0$ and $\sum_{j<k} p'_j > 0$. Since $p'_k \ge p_k$ (by the maximal mapping property), this implies $\sum_{j<k} p'_j \le \sum_{j<k} p_j$. This again implies $\mathbb{E}_{\mu_1 \sim \sigma'_1}[H_{\mu_{-1}}(\mu_1)] \ge \mathbb{E}_{\mu_1 \sim \sigma_1}[H_{\mu_{-1}}(\mu_1)]$. A similar argument holds when  $\theta \in I_j$ for $j > k$.
\end{description}

We therefore have $\mathbb{E}_{\mu_1 \sim \sigma'_1}[H_{\mu_{-1}}(\mu_1)] \ge \mathbb{E}_{\mu_1 \sim \sigma_1}[H_{\mu_{-1}}(\mu_1)]$.  Since this holds for any $\mu_{-1}$, it also holds after taking the expectation over $\mu_{-1} \sim \Delta^{-1}$. By iterating this argument for all agents, we conclude that using a maximal mapping for every agent is optimal for maximizing $\E_{\mu \sim \Delta_k} [\hat{g}_k(S;\mu)]$.
\end{proof}

The above lemma implies the following corollary:

\begin{corollary}
\label{cor:maximal}
Consider any randomized single mean policy $\Omega$ and let $\vec{U}$ denote the vector of expected (fake) utilities of the agents in this policy. Then  $\vec{U} \in \mathcal{\hat{R}}(\Delta_{\max})$, where $\Delta_{\max}$ is the mapping obtained by choosing one of the $K$ buckets uniformly at random and using the corresponding maximal mapping for each agent.
\end{corollary}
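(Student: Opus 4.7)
\textbf{Proof plan for \cref{cor:maximal}.} The plan is to derive the corollary as a direct consequence of \cref{cor:base2} combined with the Structure Lemma (\cref{lem:maximal-mapping-dominance}) via a polymatroid containment argument. Concretely, I would argue that replacing each per-bucket mapping by its maximal counterpart can only enlarge the defining polymatroid $\mathcal{\hat{R}}(\Delta)$, which is enough since the utility vector $\vec{U}$ already lies in $\mathcal{\hat{R}}(\Delta)$ for the original policy.

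First I would fix the randomized single mean policy $\Omega$ with mapping rule $\Delta = \{\Delta_k\}_{k=1}^K$ and whatever selection rule it uses. By \cref{cor:base2}, the resulting expected fake utility vector satisfies $\vec{U} \in \mathcal{\hat{R}}(\Delta)$, so for every $S\subseteq[n]$,
\[
\sum_{i\in S} U_i \;\le\; \frac{1}{K}\sum_{k=1}^K \E_{\mu\sim\Delta_k}\!\left[\hat{g}_k(S;\mu)\right].
\]
Next I would apply \cref{lem:maximal-mapping-dominance} separately to each bucket $k$: since agents signal independently, the lemma gives
\[
\E_{\mu\sim\Delta^{\max}_k}\!\left[\hat{g}_k(S;\mu)\right] \;\ge\; \E_{\mu\sim\Delta_k}\!\left[\hat{g}_k(S;\mu)\right]
\]
for every $S\subseteq[n]$. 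Averaging the $K$ inequalities gives that the right-hand side of every defining constraint of $\mathcal{\hat{R}}(\Delta_{\max})$ dominates the corresponding right-hand side of $\mathcal{\hat{R}}(\Delta)$, yielding the polymatroid containment $\mathcal{\hat{R}}(\Delta) \subseteq \mathcal{\hat{R}}(\Delta_{\max})$. Chaining this with $\vec{U} \in \mathcal{\hat{R}}(\Delta)$ finishes the proof.

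The main conceptual obstacle is not in this corollary itself but was already handled upstream: \cref{lem:maximal-mapping-dominance} required submodularity of $f$ together with the independence of agents' signals in order to justify the one-agent-at-a-time swap to a maximal mapping. Given that lemma, the present step is essentially a monotonicity argument on the bounds defining $\mathcal{\hat{R}}$, and no further submodularity, convex-hull, or vertex-level argument is needed. I would keep the write-up short, simply invoking \cref{cor:base2,lem:maximal-mapping-dominance} and pointing out that the defining inequalities of $\mathcal{\hat{R}}$ are bucket-separable and monotone in $\E_\mu[\hat{g}_k(S;\mu)]$.
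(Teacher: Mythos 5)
Your proof is correct and follows essentially the same route as the paper: invoke \cref{cor:base2} to place $\vec{U}$ in $\mathcal{\hat{R}}(\Delta)$, then use \cref{lem:maximal-mapping-dominance} bucket-by-bucket to show each defining constraint of $\mathcal{\hat{R}}(\Delta_{\max})$ dominates the corresponding one for $\mathcal{\hat{R}}(\Delta)$, giving the containment $\mathcal{\hat{R}}(\Delta) \subseteq \mathcal{\hat{R}}(\Delta_{\max})$. The paper's writeup is slightly terser but the argument is identical.
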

\begin{proof}
    The vector $\vec{U}$ belongs to  the polymatroid $\mathcal{\hat{R}}(\Delta)$ in \cref{eq:poly2}, where the mapping $\Delta$ corresponds to the policy $\Omega$.  By \cref{lem:maximal-mapping-dominance}, if we use the maximal mapping in each bucket instead, the RHS of the constraints in \cref{eq:poly2} do not decrease. This means $\vec{U} \in \mathcal{\hat{R}}(\Delta_{\max})$. %This in turn implies there is a feasible randomized single mean policy that uses the maximal mapping in each bucket, and has expected utilities at least as large as $\vec{U}$.
\end{proof}

%\yhcmt{Do we need to also mention that any point in the max-welfare hyperplane of the polymatroid $\mathcal{\hat{R}}(\Delta_{\max})$ can be achieved by some selection rule, otherwise the majorized point may not exist?}

\subsection{Main Result: Proof of \cref{thm:main-informal-exist,thm:main-informal-compute}}
We now combine the structural results from the preceding sections to formally state and prove our main theorem, which establishes the existence of a computationally efficient signaling policy with a logarithmic approximation guarantee for majorization. %We state the theorems formally below. 
We again note that this result is complemented by a lower bound from~\citet{banerjee_majorized_2024} that rules out an $o(\log \log V)$-majorized policy even for selecting one agent.

\begin{theorem}
\label{thm:main-formal}
Consider the signaling problem with a polymatroid constraint $\mathcal{P}(f)$, where agents have independent quality distributions supported on $[1, V]$. Assume the receiver is a $(1+\varepsilon)$-approximate welfare maximizer who acts on canonical posterior means derived from a partition of $[1, V]$ into $K = O((\log V)/\varepsilon)$ buckets. Then, there exists a signaling policy $\Omega$ that is $O((\log V)/\varepsilon)$-majorized over the set of all possible independent signaling policies. Furthermore, a policy that yields an additive $O(\delta)$ approximation to this utility vector can be computed in time polynomial in $\frac{n V}{\epsilon \cdot \delta}$.
\end{theorem}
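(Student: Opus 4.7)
The plan is to combine the structural machinery of \cref{cor:maximal} with the polymatroid majorization result of \cref{thm:poly-major}, and then lift the resulting fake-utility guarantee to true utilities with a factor-of-$K$ loss. First, I would apply \cref{thm:poly-major} to the polymatroid $\mathcal{\hat{R}}(\Delta_{\max})$ defined in \cref{cor:base2}, obtaining a point $\vec{\hU}^\star$ in its base polytope that $1$-majorizes every vector in $\mathcal{\hat{R}}(\Delta_{\max})$. By the inclusion $\mathcal{B}(\mathcal{\hat{R}}(\Delta_{\max})) \subseteq \mathcal{U}$ from \cref{cor:base2}, this vector $\vec{\hU}^\star$ is realizable as the fake-utility vector of some signaling policy $\Omega^\star$ that uses the mapping $\Delta_{\max}$ together with an explicit selection rule obtained from the greedy tie-breaking procedure in the proofs of \cref{lem:mat-poly} and \cref{lem:randomized-bucket-submodularity}.

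Next, I would compare $\Omega^\star$ against an arbitrary competing signaling policy $\Omega'$. Its randomized single-mean reformulation $\Omega'_{\mathrm{rsm}}$ (the same underlying signaling scheme, but accounted for via fake utility over a uniformly chosen bucket) satisfies $\vec{\hU}(\Omega'_{\mathrm{rsm}}) = \vec{U}(\Omega')/K$, and by \cref{cor:maximal} this vector lies in $\mathcal{\hat{R}}(\Delta_{\max})$. Since fake utility is pointwise a lower bound on true utility, $\vec{U}(\Omega^\star) \ge \vec{\hU}^\star$ coordinatewise, and hence $Q_j(\vec{U}(\Omega^\star)) \ge Q_j(\vec{\hU}^\star) \ge Q_j(\vec{U}(\Omega'))/K$ for every prefix size $j$. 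This yields the desired $K = O((\log V)/\varepsilon)$-majorization of true utilities, establishing the existence part of the theorem.

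For the computational result, I would mirror the multiplicative-weights template of \cref{thm:poly-alg}. The maximal mapping $\Delta_{\max}$ is fixed and can be pre-computed: for each of the $n$ agents and each of the $K$ buckets it is the solution to a small linear program, as in \citet{banerjee_majorized_2024}. For each prefix size $j$, I would apply the analogue of the LP in \cref{lem:lp} whose constraints encode membership in the polymatroid $\mathcal{\hat{R}}(\Delta_{\max})$, and use multiplicative weights to binary-search for the largest feasible $\OPT_j$ up to additive $\delta$. The required dual oracle reduces to maximizing a nonnegative linear function over $\mathcal{\hat{R}}(\Delta_{\max})$, which decouples across scenarios $\mu \sim \Delta_k$; within each scenario, running the greedy algorithm against the submodular rank function of \cref{lem:randomized-bucket-submodularity} solves the subproblem in polynomial time. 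The expectation over $\mu$ is estimated to additive $\delta/n$ accuracy using $\mbox{poly}(n, V, 1/\delta)$ samples via Hoeffding's inequality, and the polytope width is bounded by $O(nV)$. A final pass of multiplicative weights on the combined LP (Lagrangianizing the utility-matching constraints) produces an explicit policy whose fake-utility vector is additively $O(\delta)$-close to $\vec{\hU}^\star$; sampling a random iterate and running its dual-weighted per-scenario greedy allocation gives the actual signaling policy, with total running time $\mbox{poly}(n, V, 1/\varepsilon, 1/\delta)$.

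The main obstacle is the passage from fake to true utilities, together with the subtlety that $\vec{\hU}^\star$, while sitting abstractly in the base polytope of $\mathcal{\hat{R}}(\Delta_{\max})$, must be realized as a genuine achievable fake-utility vector under the fixed mapping $\Delta_{\max}$ and some tie-breaking allocation rule of the receiver. This is precisely what the Minkowski-sum argument of \cref{cor:base2}, combined with the vertex characterization derived in the proof of \cref{lem:randomized-bucket-submodularity}, provides; once this correspondence is in hand, the remaining steps follow routinely from known polymatroid and multiplicative-weights machinery.
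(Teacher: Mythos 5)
Your proposal is correct and follows essentially the same route as the paper's own proof: project an arbitrary competitor $\Omega'$ to its randomized single-mean version (losing a factor $K$), invoke \cref{cor:maximal} to place that fake-utility vector inside $\mathcal{\hat{R}}(\Delta_{\max})$, apply \cref{thm:poly-major} to pick the $1$-majorized point of that polymatroid's base polytope and realize it via the tie-breaking greedy selection rule, and finally lift from fake to true utilities; the computational half is the same multiplicative-weights sketch as \cref{thm:poly-alg} with the scenario oracle replaced by the greedy algorithm over the rank function of \cref{lem:randomized-bucket-submodularity}. The one place you are more careful than the paper is in stating explicitly that the fake utility pointwise lower-bounds the true utility, which is the step the paper's write-up leaves implicit when it asserts $U(\Omega_{\text{maj}})$ majorizes $U(\Omega^*)/K$.
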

\begin{proof}
The proof consists of two parts. First, we prove the existence of a policy with the stated approximation guarantee by relating any optimal policy to the randomized single mean policies described above. Second, we argue that this policy can be computed in polynomial time using the multiplicative weights update framework from \cref{sec:full-rev}.

\paragraph{Proof of \cref{thm:main-informal-exist}.} We now show the existence result. Let $\Omega^*$ be any signaling policy. Let $U(\Omega^*)$ be the vector of true expected utilities for this optimal policy.  As discussed before, consider a randomized single mean policy, $\Omega_{\text{rsm}}$, which is constructed from $\Omega^*$. This policy works by first choosing a bucket $k \in \{1, \dots, K\}$ uniformly at random and creating a fake utility function that only grants the utility an agent would have received from that specific bucket in the original policy $\Omega^*$. The expected utility for agent $i$ under this constructed policy is $U(\Omega_{\text{rsm}}) = U(\Omega^*) / K$.

%\kmedit{ Let $\Omega'_{\text{rsm}}$ be the signaling policy that first chooses a bucket $k$ uniformly at random, and then implements the maximal mapping policy for that bucket. By \cref{cor:maximal}, for this policy (with some selection rule), the expected utility vector $U(\Omega'_{\text{rsm}})$ component-wise dominates $U(\Omega_{\text{rsm}})$. }

By \cref{cor:maximal}, the utility vector $\{U(\Omega_{\text{rsm}})\}_{i=1}^n \in \mathcal{\hat{R}}(\Delta_{\max})$, where $\Delta_{\max}$ is the mapping rule that first chooses a bucket $k$ uniformly at random, and then implements the maximal mapping rule for that bucket. Consider the class of signaling policies $\mathcal{C}_{\text{max}}$ that use $\Delta_{\max}$ as their mapping rule. In such policies, the mapping rule is now decoupled from the selection rule since the mapping  $\Delta_{\max}$ can be pre-computed.  By \cref{cor:base2}, the set of expected utility vectors achievable by policies in $\mathcal{C}_{\text{max}}$ lies within the polymatroid $\mathcal{\hat{R}}(\Delta_{\max})$ and contains its base polytope, and this set is non-empty. By \cref{thm:poly-major}, this base polytope has a signaling policy, call it $\Omega_{\text{maj}}$, which is $1$-majorized over all policies in $\mathcal{C}_{\text{max}}$, and hence over all  vectors in $\mathcal{\hat{R}}(\Delta_{\max})$. In particular,   this means $U(\Omega_{\text{maj}})$ majorizes $U(\Omega_{\text{rsm}})$, which is at least $U(\Omega^*) / K$.  This means the utility vector $U(\Omega_{\text{maj}})$ majorizes $U(\Omega^*) / K$.  This proves the existence of a policy that is $O\left(\frac{\log V}{\epsilon}\right)$-majorized, since $K = O\left(\frac{\log V}{\epsilon}\right)$.

\paragraph{Proof of \cref{thm:main-informal-compute}.}  We next sketch the computational result.  We use the multiplicative weights approach from \cref{sec:poly-time}, where we use the posterior (bucketed) mean vector $\mu'$ found by the maximal mapping instead of the value vector. The core requirement for such methods to be efficient is the existence of a polynomial-time oracle for maximizing any linear function over the following polymatroid. Given $\mu'$, the polymatroid has rank function $\hat{g}_k(S; \mu')$. The dual oracle must solve $\max_{u} w \cdot u$ over this polymatroid for a given weight vector $w$. This can be solved by the polymatroid greedy algorithm, which requires oracle access to the rank function $\hat{g}_k(S;\mu')$, followed by sampling scenarios $\mu'$. By combining this with a binary search over the optimal utility values (the prefix sums $Q_j$), we obtain a  polynomial-time additive approximation scheme  for computing the desired logarithmically-approximate majorized policy. We omit the details as they are similar to the proof in \cref{sec:poly-time}.  This concludes the proof of \cref{thm:main-formal}.
\end{proof}
\section{Extensions and Open Questions}
\label{sec:discuss}
Our main contribution is a structural characterization of the utility space in Bayesian persuasion with polymatroid constraints, showing it forms a base polytope of a different polymatroid. This enabled a direct geometric approach to construct a logarithmically-approximate majorized signaling policy. This result highlights a new connection between the geometry of information design and the combinatorial structure of submodular optimization, with potential applications to other information design problems. 

Our techniques easily extend to the setting where the utility of an agent is a fixed multiplier of its allocation, rather than allocation multiplied by the quality (or value). The former case is simpler, since the utility vector now coincides with the allocation vector (appropriately scaled). For a welfare maximizing receiver, the resulting set of utility vectors is trivially a face of $\mathcal{P}(f)$ and is hence the base polytope of a polymatroid. This observation extends the results in the paper to show the same approximation factor for majorization.

Our work leaves several questions open. One open question is to extend our results to the case where the intermediary can correlate the signals between agents. Another question is to understand the combinatorics of the induced fairness polyhedron. Our proof constructs its rank function $g$, but we do not study its interpretation. For instance, if the original constraint is a randomization over independent sets of a matroid, then how is the induced base polytope related to the original matroid? Next, can we design efficient algorithms to find a specific point that maximizes, for instance, the Nash welfare or max-min fairness? Our results imply a logarithmic approximation in polynomial time, but  it is likely these problems admit to a FPTAS. 

At a higher level, it would be interesting to explore other models of allocation. For instance, what if information revelation has a cost, so that, say, the sender's signals are constrained to focus on a few agents? Similarly, what if the agents arrive one at a time, with both the sender and the receiver knowing their priors upfront, while the receiver has to make irrevocable allocations to each arriving agent based on its signal? Finally, can we apply fair persuasion to settings where the receiver is solving a stochastic optimization problem, where for instance, performing two-stage optimization to design a network over the agents, or running a prophet pricing algorithm over the agents~\citep{tang2024intrinsic}. These questions offer a rich domain for structural and algorithmic inquiry. 

\paragraph{Acknowledgment.} We have used Gemini 2.5 Pro to paraphrase and strengthen some text and to perform literature search.

%{\small
\bibliographystyle{plainnat}
%%%\nocite{*}
\bibliography{references}
%}
\newpage
\appendix
\section{Example Illustrating Signaling and Fairness}
\label{app:example}

We present an example to demonstrate the simultaneous failure of naive information policies to achieve a good approximation ratio, and the power of a carefully designed signaling policy. We consider $n+1$ agents $\{0, 1, \dots, n\}$, with the constraint that at most one agent can be selected. The polymatroid is therefore the set of allocation vectors (probability of selection) that have $\ell_1$ norm at most one. The receiver selects the agent with highest posterior mean, using a randomized tie-breaking rule specified by the sender. 

Let $q=1/\sqrt{n}$. Agent 0 has a deterministic quality $v_0=2-q$, while agents $i \in \{1,\dots,n\}$ have i.i.d. quality that is $1/q$ with probability $q$ and $1$ otherwise, so all agents have the same prior mean of $2-q$. The two baseline policies illustrate the following trade-off:  

\begin{itemize}
    \item In the no-revelation policy, we assume the receiver allocates to an agent uniformly at random, since their posterior means are identical. This results in a total welfare of $2-q = O(1)$, while the max-min fair value achieved is $\frac{2-q}{n+1} = \Theta(1/n)$. 
\item  In the full-revelation policy, agent $0$ is chosen only when all other agents have value $1$, while happens with probability $O(e^{-\sqrt{n}})$. The social welfare is now $\Theta(\sqrt{n})$, which is a factor of $\Theta(\sqrt{n})$ larger than that of no-revelation. On the other hand, its max-min utility is now $O(e^{-\sqrt{n}})$, a super-polynomial factor worse than that of no-revelation.
\end{itemize}

Note that in our example, $V = 1/q$, so that the approximation ratio for majorization in \cref{thm:main-informal-exist} is $O(\log n/\epsilon)$. Clearly, the above two policies do not achieve this.

We now construct a policy that is simultaneously a constant-factor approximation to the social welfare of full-revelation and the max-min fair value of no-revelation. The sender designs a scheme for each  agent $i \in \{1,\dots,n\}$: if its true value is $1/q$, send a ``HIGH'' signal with a small probability $p=1/(nq)$; otherwise, send a ``LOW'' signal. This ensures the probability of any single agent sending a HIGH signal is exactly $q \cdot p = 1/n$. The receiver's posterior means are then:
\begin{itemize}
    \item $\mathbb{E}[v_i|\text{HIGH}_i] = 1/q = \sqrt{n}$, since the HIGH signal is only ever sent in the high-value state.
    \item For the LOW signal, we use Bayes' rule:
    \begin{align*}
        \mathbb{E}[v_i|\text{LOW}_i] &= \frac{\Pr(\text{LOW}|v_i=\frac{1}{q})\Pr(v_i=\frac{1}{q})\cdot\frac{1}{q} + \Pr(\text{LOW}|v_i=1)\Pr(v_i=1)\cdot 1}{\Pr(\text{LOW})} \\
        &= \frac{(1-p)q \cdot \frac{1}{q} + 1 \cdot (1-q)}{ (1-p)q + (1-q) } = \frac{1-p+1-q}{1-pq} = \frac{2-q-p}{1-pq}.
    \end{align*}
    Substituting $p=1/(nq)$, for large $n$ this posterior mean is $\frac{2-1/\sqrt{n}-1/n^{1.5}}{1-1/n}$, which is slightly smaller than $2-q$.
\end{itemize}
The receiver's strategy is as follows: if any HIGH signals are received, select one of these agents (posterior mean  $=\sqrt{n}$); if all signals are LOW (an event with constant probability for large $n$), select agent 0 (value $2-q$) over the others (posterior $\approx 2$). This policy achieves an expected social welfare of $\Theta(\sqrt{n})$, which is within a constant factor of the optimal welfare. At the same time, it guarantees a max-min utility of $\Theta(1/n)$, as agents $1,2,\ldots,n$ are selected with probability $\Theta(1/n)$ each. This single policy is therefore a constant-factor approximation to both the optimal social welfare (achieved by full revelation) and the optimal max-min utility (which is $\Theta(1/n)$).

\section{Impossibility of Majorization with Non-Polymatroidal Constraints}
\label{app:non-polymatroid-impossibility}

%In this section, we show two examples that obstruct generalization of \cref{thm:main-informal-exist} and \cref{lem:mat-poly} to non-polymatroidal constraints.  

%\paragraph{Impossibility with $V=1$.} 
We construct a simple, deterministic allocation problem to show that for certain non-polymatroidal constraints, the approximation factor for majorization must grow at least linearly with the number of agents, showing that \cref{thm:main-informal-exist} cannot be generalized to arbitrary constraint sets, and requires special properties of polymatroids. Our counterexample holds when $V = 1$ and all value distributions $D_i$ are deterministic, so that no signaling is required.

Consider $n$ agents, each with a deterministic value of $v_i=1$. Utility is therefore equal to allocation. The set of feasible allocations $\mathcal{P}$ is the convex hull of $n$ vectors $\{u^{(1)}, \dots, u^{(n)}\} \subset \mathbb{R}^n$. For a large constant $M > n$, the vector $u^{(j)}$ is defined by its components $u^{(j)}_k = M^j$ if $k \ge j$ and $u^{(j)}_k=0$ if $k < j$. Any feasible allocation is a convex combination $x = \sum_{j=1}^n p_j u^{(j)}$ for some probability vector $p$. A crucial property of this construction is that any feasible allocation $x$ is sorted: $x_k = \sum_{j=1}^k p_j M^j = x_{k-1} + p_k M^k \ge x_{k-1}$. Thus, the $j$ smallest utilities are simply the first $j$ components of the allocation vector, and $Q_j(x) = \sum_{k=1}^j x_k$.

First, we find the optimal policy for each prefix sum objective. The objective $Q_j(x)$ is a linear function of the probabilities $p$, so its maximum must be achieved at a vertex of the probability simplex, i.e., by a pure policy $p_k=1$ for some $k$. If we choose the policy $p_k=1$, the allocation is $x=u^{(k)}$, and the prefix sum is $Q_j(u^{(k)}) = (j-k+1)M^k$ if $j \ge k$, and 0 otherwise. Since $M>n$, this value is maximized over $k \in \{1,\dots,j\}$ when $k=j$. Thus, the optimal policy for maximizing $Q_j$ is the pure strategy $p_j=1$, and the optimal value is $Q_j^* = M^j$.

Now, let us assume a single policy $x=\sum p_j u^{(j)}$ is $\beta$-majorized for some $\beta = o(n)$. This requires $Q_j(x) \ge Q_j^*/\beta = M^j/\beta$ for all $j \in \{1, \dots, n\}$. The exact value of the prefix sum is $Q_j(x) = \sum_{i=1}^j p_i M^i (j-i+1)$. We can bound this by isolating the dominant term: $Q_j(x) = p_j M^j + \sum_{i=1}^{j-1} p_i M^i (j-i+1)$. The summation is clearly bounded above by $n M^{j-1}$ (since $\sum p_i \le 1$ and $j-i+1 \le n$). The majorization condition thus implies $p_j M^j + n M^{j-1} \ge M^j/\beta$. Dividing by $M^j$, we get $p_j + n/M \ge 1/\beta$, which gives the necessary condition $p_j \ge 1/\beta - n/M$ for each $j$. Summing over all $j=1,\dots,n$:
\[ 1 = \sum_{j=1}^n p_j \ge \sum_{j=1}^n \left(\frac{1}{\beta} - \frac{n}{M}\right) = \frac{n}{\beta} - \frac{n^2}{M}. \]
Rearranging this gives a lower bound on the required approximation factor: $\beta \ge \frac{n}{1+n^2/M}$. By choosing $M$ to be a sufficiently large polynomial in $n$ (e.g., $M=n^3$), this implies $\beta \ge \Omega(n)$. This contradicts the assumption that $\beta$ is sub-linear. Therefore, no such policy can exist.

\section{Discussion on \cref{lem:mat-poly}}
\label{app:sec3}
We now present two pieces of evidence to show the non-triviality of \cref{lem:mat-poly}, in that it needs delicate arguments that are tailored to the setting we consider. First, we show that for non-polymatroidal allocation constraints, $1$-majorization in the allocation space will not imply majorization in the utility space, to any sub-linear approximation. This shows that our proof crucially requires the polymatroidal structure of the allocation space. We next show that part of our argument is not a generic result for polymatroids, and is specific to the utility polytope we define. In particular, we show that the result $\mathcal{B}(g) \subseteq \mathcal{U}$ is not merely a consequence of the submodularity of $g$ (where $g$ is the saturation function of $\mathcal{U}$), and the result is only true for the specific $\mathcal{U}$ (utility vectors of the receiver-optimal allocation) that we define. 

\paragraph{Impossibility with Majorized Allocation Set.} We first show a non-polymatroidal allocation set that has a $1$-majorized point, but even with deterministic values of the agents, the corresponding set of utility vectors does not have a point with sub-linear approximation to majorization. This rules out generalizing \cref{lem:mat-poly} to non-polymatroidal constraints.
 
In the example below, the value distributions $D_i$ are deterministic, so no signaling is required. Consider $2n$ agents, evenly partitioned into $n$ groups $g_1,\dots, g_n$, such that $g_i=\{i, 2n-i\}$. The feasible allocations is the convex hull of the indicator vectors $\{\mathbbm{1}[g_1],\dots, \mathbbm{1}[g_n]\}\subset \R^n$. The allocation that selects each $g_i$ with probability $1/n$ is a $1$-majorized policy. 

Let $N =\omega(n^{2n})$. For $1\leq i\leq n$, let agent $i$ take value $v_i=n^{2i}$ and agent $2n-i$ take value $v_{2n-i}=N-n^{2i}$. Since the total value of any group is $N$, the set of receiver-optimal allocations is the probability simplex $\{\vec{p} \mid \sum_{i=1}^n p_i = 1\}$, where $p_i$ is the probability of selecting group $g_i$. 

We first derive a lower bound for the sum of the smallest $i$ utilities. Consider the following class of allocations $\{A_i\}_{i=1}^n$. Define $A_i$ to be the allocation that selects $g_j$ with probability $1/n^2$ for $j\neq i$ and $g_i$ with probability $1-(n-1)/n^2$. Then, the $i^{th}$ smallest utility for $A_i$ is 
\[
\paren{1-\frac{n-1}{n^2}}\cdot n^{2i}\geq n^{2i}-n^{2i-1}.
\]
This then lower bounds the sum of the $i$ smallest utilities. 

Suppose next that there is an $\alpha$-majorized policy $A$ for the set of utility vectors, which selects $g_i$ with probability $p_i$. Then, the sum of $i$ smallest utilities is at most 
\[
\sum_{j=1}^{i-1}n^{2j}+p_i\cdot n^{2i}\leq \paren{p_i+\frac{2}{n^2}}\cdot n^{2i}. 
\]
For $A$ to be $\alpha$-majorized, 
\[
\alpha\cdot \paren{p_i+\frac{2}{n^2}}\cdot n^{2i}\geq n^{2i}-n^{2i-1}.
\]
Therefore, 
\[\alpha\cdot \paren{p_i+\frac{2}{n^2}}\geq 1-o(1), \qquad \mbox{and} \qquad p_i\geq \frac{1-o(1)}{\alpha}-\frac{2}{n^2}.\]
Since this holds for all $i$ and $\sum_{i=1}^n p_i=1$, we have
\[
\frac{n(1-o(1))}{\alpha}-\frac{2}{n}\leq 1.
\]
This implies 
\[
\alpha\geq \frac{n(1-o(1))}{1+\frac{2}{n}}=n(1-o(1)).
\]
Therefore, the best majorization factor for the set of utility vectors grows linearly with the number of agents, despite the existence of a $1$-majorized point for the set of allocation vectors. 

\paragraph{Submodular Saturation Functions do not Suffice.} We next show that the result $\mathcal{B}(g) \subseteq \mathcal{U}$ is not a general result for arbitrary convex polytopes $\mathcal{U}$ that are constant sum and  whose saturation function $g$ is submodular. Here, constant-sum means the sum of coordinates is a constant. Also recall that the saturation function $g(S)$ for a polytope is the maximum over the polytope of the sum of the coordinates in $S$. We show an example where a constant sum convex polytope $\mathcal{U}$ has submodular saturation function $g$, but is strictly contained in the base polytope $\mathcal{B}(g)$. This shows the proof of \cref{lem:mat-poly} is delicate in requiring specific properties of the $\mathcal{U}$ that we define. 

We start by defining the submodular function $g$. Let $h$ be defined as $h(0) = 0$, $h(1) = 1$, $h(2) = 1.9$, and $h(3) = 2.5$. This is clearly concave. Let $g(S) = h(|S|)$ for sets $S$ of size at most three. Consider the $3$-dimensional base polytope $\mathcal{B}(g)$. This is defined by the constraints:
$$ \left\{(x,y,z) \ge 0 \ | \ \max\{x,y,z\} \le 1; \ \max\{x+y, y+z, x+z\} \le 1.9; \ x+y+z = 2.5 \right\}. $$
Projecting onto the $(x,y)$ plane, we obtain the following hexagon:
$$ \left\{(x,y) \ge 0 \ | \ \max\{x,y\} \le 1; \ \min\{x, y\} \ge 0.6; \ 1.5 \le x+y \le 1.9 \right\}. $$
Note that there is a one-to-one mapping between the points in the planar hexagon and the points in $\mathcal{B}(g)$. Further, every edge in the planar hexagon corresponds to one saturation function of $\mathcal{B}(g)$. For instance, the edge $x+y = 1.5$ corresponds to $z = 1$ (corresponding to the set $S = \{3\}$, the dimension for $z$). Similarly, $x = 0.6$ corresponds to $y+z = 1.5$ (corresponding to the set $S = \{2,3\}$, the dimensions for $y,z$). The vertices of the hexagon are of the form $(a,b)$, where $a \neq b$, and $a,b \in \{1,0.9,0.6\}$.

Now consider some vertex of the hexagon, say $(1,0.6)$ and the corresponding vertex $v \in \mathcal{B}(g)$. Define $\mathcal{U} \subset \mathcal{B}(g)$ by intersecting the planar hexagon with a halfspace that removes $v$ from $\mathcal{B}(g)$, but preserves the other edges and vertices. For instance, add the halfspace $x - y \le 0.39$. Note that the resulting projection of the convex polytope $\mathcal{U}$ on the plane is defined by all of the original halfspaces of the planar hexagon, plus the new halfspace. This means that for any subset $S$ of dimensions, the function $g'(S)$ that maximizes the sum of the coordinates in $S$ over $\mathcal{U}$ coincides with the same function for $\mathcal{B}(g)$, which is $g(S)$. But $\mathcal{U} \subset \mathcal{B}(g)$ by construction. This also means $\mathcal{U}$ is constant sum, since $x+y+z = 2.5$ by construction. Therefore, we have an example $\mathcal{U}$ whose saturation function $g$ is submodular, but which lies strictly within $\mathcal{B}(g)$.

\end{document}